\newtheoremstyle{mytheorem}
  {.2em}
  {.2em}
  {\itshape}
  {}
  {\bfseries}
  {}
  {.5em}
  {\thmname{#1}\thmnumber{ #2}\thmnote{ (#3)}}
\theoremstyle{mytheorem}
\newtheorem{theorem}{Theorem}
\newtheorem{problem}{Problem}
\crefname{section}{Section}{Sections}
\Crefname{section}{Section}{Sections}
\crefname{appendix}{Appendix}{Appendices}
\Crefname{appendix}{Appendix}{Appendices}
\crefname{figure}{Figure}{Figures}
\Crefname{figure}{Figure}{Figures}
\crefname{equation}{Equation}{Equations}
\Crefname{equation}{Equation}{Equations}
\crefname{table}{Table}{Tables}
\Crefname{table}{Table}{Tables}
\crefname{lemma}{Lemma}{Lemmas}
\Crefname{lemma}{Lemma}{Lemmas}
\crefname{theorem}{Theorem}{Theorems}
\Crefname{theorem}{Theorem}{Theorems}
\crefname{problem}{Problem}{Problems}
\Crefname{problem}{Problem}{Problems}
\crefname{algorithm}{Algorithm}{Algorithms}
\Crefname{algorithm}{Algorithm}{Algorithms}
\pgfplotsset{compat=1.14}
\newcommand{\ph}[1]{\vspace{2mm} \noindent \textbf{#1}\quad}
\newcommand{\method}{QuickSel\xspace}
\let\sl\relax
\newcommand{\sl}{{\color{red} REMOVE THIS}\xspace} 
\newcommand{\dmv}{\texttt{DMV}\xspace}
\newcommand{\insta}{\texttt{Instacart}\xspace}
\newcommand{\synthetic}{\texttt{Gaussian}\xspace}
\newcommand{\ww}{\bm{w}}
\let\ss\relax
\newcommand{\ss}{\bm{s}}
\newcommand{\minus}{\scalebox{0.75}[1.0]{$-$}}
\DeclareMathOperator*{\argmin}{arg\,min}
\newcommand{\ignore}[1]{}
\newcommand{\tofix}[1]{{\color{red} #1}}
\newcommand{\yongjoo}[1]{}
\newcommand{\barzan}[1]{}
\definecolor{vintageblack}{HTML}{484043}
\definecolor{vintageyellow}{HTML}{FFC805}
\definecolor{vintagegreen}{HTML}{38A528}
\definecolor{vintageorange}{HTML}{FF4D25}
\definecolor{vintagepurple}{HTML}{AE56E2}
\begin{document}
\title[QuickSel: Quick Selectivity Learning with Mixture Models]{QuickSel: Quick Selectivity Learning \\with Mixture Models}

\author{Yongjoo Park$^{1}\mbox{*}$,\; Shucheng Zhong$^{2}\mbox{*}$,\; Barzan Mozafari$^2$}
\authornote{These authors contributed equally to this work.}

\affiliation{%
 \vspace{-1em}\institution{University of Illinois at Urbana-Champaign$^1$\hspace{1.0em} University of Michigan$^2$}
}
\email{yongjoo@illinois.edu\hspace{1.5em}\{joezhong,mozafari\}@umich.edu}



\begin{abstract}
Estimating the selectivity of a query is a key step in almost 
any cost-based query optimizer.
Most of today's databases rely on 
histograms or samples 
that are  
    periodically 
    refreshed by re-scanning the data
    as the underlying data changes.
Since frequent scans are costly,
    these statistics are often stale 
    and lead to poor selectivity estimates.
 As an alternative to scans,
\emph{query-driven histograms}
 have been proposed,
 which refine the histograms
 based on the actual selectivities 
        of the  observed queries.
 Unfortunately, these approaches are either too costly to use 
in practice---i.e., require an exponential number of buckets---or  
      quickly lose their advantage as 
    they observe more queries.
 
 In this paper, we propose a \emph{selectivity learning} framework, called \method, 
 which falls into the query-driven paradigm
 but does not use histograms.
 Instead, it builds an internal \emph{model} of
 the underlying data, which can 
 be refined significantly faster (e.g., only 
 1.9 milliseconds for 300 queries). 
 This fast refinement allows 
 \method to continuously 
  learn from \emph{each query} and 
  yield increasingly more accurate selectivity estimates over time.
Unlike query-driven histograms, \method relies on a mixture model and a new optimization algorithm for training its model.
Our extensive experiments on two real-world datasets
confirm that, given the same target accuracy, \method is 
34.0$\times$--179.4$\times$
 faster than state-of-the-art query-driven histograms, including ISOMER and STHoles. Further, given the same space budget, \method is 26.8\%--91.8\% more accurate than 
periodically-updated histograms and samples, respectively.

\end{abstract}

\ignore{
Estimating query answers' cardinalities is an essential task for finding an optimal query plan; for modern big data analysis, even small relative improvements in query plans can lead to large savings in absolute time. The most common cardinality estimation techniques are based on precomputed data synopses, such as histograms or samples. In this work, we show we can produce much more precise cardinality estimates by learning a data distribution from thousands of (freely available) past query answers. This result is possible from our new approach that constructs a \emph{maximum entropy distribution} consistent with past query answers with only $O(n^2)$ space; this is a significant improvement over previous maximum entropy-based methods, which require $O(2^n)$ space. This saving in space complexity also leads to a quicker data distribution learning. Thanks to this computational efficiency, we can use a much larger number of past queries (within a reasonable time-bound); thus, being able to learn a precise data distribution. Our empirical study suggests that our method is not only superior to previous learning-based cardinality estimation methods, but also superior to conventional cardinality estimation techniques, such as histograms and sampling. Also, we demonstrate that a database system using our method gradually improves the quality of its query plans as processing more queries. To the best of our knowledge, we are the first that demonstrates this database system's evolutionary behavior resulting from learning-based cardinality estimations.
}

\maketitle

\ignore{the state-of-the-art technique
     on adaptive histograms relies on the \emph{maximum entropy principle}
     to refine the bucket frequencies based on the observed selectivities 
        of the past queries.}


\section{Introduction}
\label{sec:intro}

Estimating the \emph{selectivity} of a query---the fraction of input tuples that satisfy the query's predicate---is a fundamental component in cost-based query optimization, 
including  both traditional RDBMSs~\cite{oracle_cbo,postgres_cbo,ibm_cbo,sqlserver_cbo,mariadb_cbo} and modern SQL-on-Hadoop engines~\cite{hive_cbo,spark_cbo}.
The estimated selectivities allow the query optimizer to choose the cheapest access path
or query plan~\cite{selinger1979access,kester2017access}.

Today's databases typically rely on histograms~\cite{postgres_cbo,ibm_cbo,sqlserver_cbo} or samples~\cite{oracle_cbo} for their selectivity estimation.
These structures need to be populated in advance by performing costly table scans. 
However, as the underlying data changes, 
    they quickly become stale and highly inaccurate.
This is why they need to be updated periodically, creating additional costly operations
    in the database engine
    (e.g., \texttt{ANALYZE table}).\footnote{Some database 
    systems~\cite{sqlserver_cbo} automatically update their statistics when the number of modified tuples exceeds a threshold.}

\ignore{
\begin{table*}[t]
\caption{The selectivity estimation methods that can be refined from observed queries.}
\label{tab:diff}
\centering
\small
\begin{tabular}{l l c c}
\toprule
\textbf{Method} 
  & \textbf{Representation} 
  & \textbf{High Accuracy}
  & \textbf{Fast Refinement} \\
\midrule
Freq-split Histograms~\cite{aboulnaga1999self,lim2003sash}
  & histograms & No & Yes \\[5pt]
MaxEnt Histograms~\cite{srivastava2006isomer,markl2005consistently,kaushik2009consistent,re2012understanding,markl2007consistent}
  & histograms & Yes & No \\[5pt]
\method (Ours) & mixture models & Yes & 
\begin{tabular}{@{}c@{}}\textbf{Yes!} \\ \emph{(our core contribution)}\end{tabular}
\\
\bottomrule
\end{tabular}
\end{table*}
}

\begin{table*}[t]
\caption{The differences between query-driven histograms~\cite{srivastava2006isomer,markl2005consistently,kaushik2009consistent,re2012understanding,markl2007consistent} and our method (\method)}.
\label{tab:diff}

\vspace{-4mm}

\centering
\small
\begin{tabular}{l p{40mm} p{44mm} p{66mm}}
\toprule
 & \textbf{Query-driven Histograms} & \textbf{\method (ours)} & \textbf{Our Contribution} \\
\midrule
Model  & histograms \newline (non-overlapping buckets)
& mixture models \newline (overlapping subpopulations)
& Employs a new expressive model \newline $\rightarrow$ \textbf{no exponential growth of complexity} \\
\midrule
Training  & \emph{maximum entropy} \newline solved by \newline \emph{iterative scaling}
          & \emph{min difference from a uniform \newline distribution
          } \newline solved \emph{analytically}
          & A new optimization objective and its reduction \newline to quadratic programming (solved analytically)
          \newline $\rightarrow$ \textbf{fast training and model refinements} \\
\bottomrule
\end{tabular}
\end{table*}


To address the shortcoming of  scan-based 
approaches, 
    numerous proposals for query-driven histograms have been introduced,
        which
continuously correct and refine the histograms based on the actual 
selectivities observed after running each  query~\cite{srivastava2006isomer,bruno2001stholes,markl2005consistently,kaushik2009consistent,aboulnaga1999self,lim2003sash,re2012understanding,stillger2001leo,agrawal2006autoadmin}.
 There are two approaches to query-driven histograms.
The first approach~\cite{bruno2001stholes,aboulnaga1999self,lim2003sash,agrawal2006autoadmin},
which we call \emph{error-feedback histograms},
recursively splits existing buckets (both boundaries and frequencies) for every distinct query 
observed, such that  their error
is  minimized 
for the latest query.
Since the error-feedback histograms do not minimize the (average) error across multiple queries, their estimates tend to be much less accurate.

To achieve a higher accuracy, the second approach is to compute the bucket frequencies based on the maximum entropy principle~\cite{srivastava2006isomer,markl2005consistently,re2012understanding,kaushik2009consistent}.
 However, this approach (which is also the state-of-the-art) requires solving an  optimization problem,
 which quickly becomes prohibitive  
as the number of observed queries (and hence, number of buckets) grows. 
Unfortunately, one cannot simply prune
    the buckets in this approach, 
as it will
    break the underlying assumptions of their optimization algorithm (called \emph{iterative scaling},
 see \cref{sec:prelim:drawback} for details).
Therefore, they  prune the \emph{observed queries}
instead in order to keep the optimization overhead feasible in practice. 
 However, this also means discarding data that could be used for learning a more accurate distribution.

\ph{Our Goal}
We aim to develop a new framework for selectivity estimation  that can quickly refine its \emph{model} 
    after observing each query, thereby
        producing increasingly more accurate
        estimates over time.
We call this new framework \emph{selectivity learning}.
We particularly focus on designing 
    a low-overhead method that can 
        scale to a large number of observed queries
        without requiring 
        an exponential number of buckets. 
        

\ph{Our Model}
To overcome the limitations of query-driven histograms, 
we use a \emph{mixture model}~\cite{ml-textbook} to capture the  unknown distribution of the data. 
A mixture model is a probabilistic model 
to approximate an arbitrary probability density function (pdf), say $f(x)$, using a combination of simpler pdfs:
\begin{align}
f(x) = \sum_{z=1}^m \; h(z) \; g_z(x)
\label{eq:intro:model}
\end{align}
where 
$g_z(x)$ is the $z$-th simpler pdf and 
$h(z)$ is its corresponding weight.
The subset of the data 
that follows each of the simpler pdfs is called a \emph{subpopulation}. 
Since the subpopulations 
are allowed to  
overlap with one another, a mixture model is strictly more expressive
  than histograms.
In fact, it is shown that mixture models can achieve a higher accuracy than histograms~\cite{kde_vs_hist}, which is confirmed by  our empirical study 
(\cref{sec:exp:model}).
  To the best of our knowledge, 
    we are the first to propose 
        a mixture model for selectivity estimation.\footnote{Our mixture model   also differs from \emph{kernel density estimation}~\cite{heimel2015self,gunopulos2005selectivity,blohsfeld1999comparison}, which scans the actual data, rather than analyzing observed queries.}

\ph{Challenges}
Using a mixture model for selectivity learning requires finding optimal parameter values for $h(z)$ and $g_z(x)$;
 however, this optimization (a.k.a. training) is challenging for two reasons.

First, the training process aims to find  parameters that maximize
the model \emph{quality}, defined as  
$\int Q(f(x)) \, dx$ for some  metric of quality $Q$  (e.g., entropy).
However, computing this integral is non-trivial
for a mixture model since its subpopulations may overlap in  arbitrary ways.
That is, the combinations of $m$ subpopulations can
create $2^m$ distinct 
ranges,
each with a potentially different value of $f(x)$.
As a result, na\"ively computing the quality of a mixture model quickly becomes  intractable as the number of observed queries grows.

Second, the outer  optimization algorithms
 are often iterative (e.g., iterative scaling, gradient descent), which means they have to repeatedly evaluate the model quality as they search for optimal parameter values. 
 Thus, even when 
 the model quality 
 can be evaluated relatively
 efficiently,
    the overall training/optimization process can be quite costly. 
\ignore{In addition, the number of the iterations can depend on various factors (e.g., the distribution of data, query workloads, etc.), which makes the overall training time inconsistent and unpredictable. 
}

\ignore{
\ph{Previous Approach}
Most previous SL work relies on histograms whose buckets are constructed in a workload-sensitive manner; that is, it determines where to create multidimensional buckets based on the past queries' predicates and then, assigns the weights (to those buckets) in a way that satisfies the observed selectivities for those predicates.
Early work~\cite{aboulnaga1999self,lim2003sash} proposes bucket/weight splitting rules for these processes; however, its accuracy was relatively low. 
To achieve higher accuracy, recent work~\cite{srivastava2006isomer,markl2005consistently,kaushik2009consistent,re2012understanding} assigns weights (to buckets) by solving an optimization problem, which pursues two goals simultaneously: (1) consistency with pairs of (predicate, selectivity), and (2) maximum \emph{entropy} of the estimated distribution. 

Despite their high accuracy, maximum-entropy-based SL methods have two serious limitations. First, the number of the buckets may increase exponentially in the process of creating workload-sensitive buckets
(see \cref{sec:prelim:drawback}). As a result, the size of the optimization problem also quickly grows to a computationally intractable level. Second, even if one wishes to reduce to bucket counts (thus, the complexity),
buckets cannot be arbitrarily merged or reconstructed because it may make the optimization itself impossible (\cref{sec:appendix}). Thus, to reduce the complexity of the optimization problem, the previous work proposes to simply limit the number of (predicate, selectivity) pairs; however, this leads to lower accuracy.
These drawbacks of histogram-based SL are presented in more detail in \cref{sec:prelim:drawback}.
In this work, we address both problems by employing an alternative model (to histograms), by formulating a new optimization problem, and by solving the optimization problem efficiently.
}

\ignore{
\ph{Challenges}
First, finding a new alternative model for SL is challenging due to the sheer number of possible candidates. Basically, any models that can express a probability density function can serve as possible candidates, which includes Bayesian networks, factor graphs, Markov random fields, mixture models, etc.
Among these candidates, we must identify the one that can achieve both high accuracy and efficiency. 
Second, we must propose how to construct the alternative model for SL in a workload-sensitive manner and must formulate an optimization problem for the alternative model.
Third, we must be able to efficiently train the alternative model.
Unlike typical machine learning setting, high efficiency in model training is required for frequent model refinements (preferably after every query).
}

\ignore{
We propose to use a \emph{mixture model} for \sl as an alternative to histograms. A mixture model is a probabilistic model that can accurately approximate an arbitrary probability density function (pdf) using a combination of simpler pdfs (e.g., Gaussian distribution)~\cite{ml-textbook}.
We call the tuples from each simpler pdf a \emph{subpopulation}.
Unlike histogram buckets, the subpopulations of a mixture model may overlap with one another, which is the key to its expressiveness.
The potential drawback of using this complex model is the difficulty of training it; however, we devise an efficient approach, as described below.
}

\ph{Our Approach}
First, to ensure the efficiency of the model quality evaluation,
we propose a new optimization objective.
Specifically, we find the parameter values that minimize
the $L2$ distance (or equivalently, \emph{mean squared error}) 
between the mixture model
 and a uniform distribution, rather than maximizing the entropy of the model (as pursued by previous work~\cite{srivastava2006isomer,markl2005consistently,kaushik2009consistent,re2012understanding,markl2007consistent}).
As described above, directly computing the quality of a mixture model involves costly integrations over $2^m$ distinct ranges. 
However, when minimizing the $L2$ distance, 
the $2^m$ integrals can be 
 reduced to 
 only $m^2$ multiplications, hence greatly reducing the complexity of the model quality evaluation.
 Although minimizing the $L2$ distance is much more efficient than maximizing the entropy, these
two objectives are closely related
(see our report~\cite{anonymous_report} for a discussion).
 
In addition,
we adopt a non-conventional variant of mixture models, 
    called a \emph{uniform mixture model}.
While uniform mixture models 
    have been previously explored in limited settings (with only a few subpopulations)~\cite{craigmile1997parameter,gupta1978uniform}, 
    we find that they are quite appropriate  
     in our context as they allow for
      efficient computations of the $L2$ distance. 
 That is,
with this choice, we can 
evaluate the quality of a model
by only using  min, max, and multiplication operations (\cref{sec:method:est}).
Finally, our optimization  can be expressed as a standard 
\emph{quadratic program}, 
which still requires an iterative procedure.



Therefore, to avoid the costly iterative optimization,
we also devise an analytic solution that can be computed more efficiently.
Specifically,
in addition to the standard reduction 
(i.e., moving some of the
original  constraints to
the objective clause as penalty terms),
we completely relax the positivity constraints for $f(x)$, exploiting the fact that they will be naturally satisfied in the process of approximating the true distribution of the data.
With these modifications, 
we can solve for the solution
analytically by setting
the gradient of the objective function to zero.
This simple transformation speeds up 
the training by 1.5$\times$--17.2$\times$.
In addition, since our analytic solution requires a constant number of operations, the training time is also 
consistent across different 
datasets and workloads.




Using these ideas, we have developed a first prototype of our selectivity learning proposal, called \method,
        which allows for extremely fast model refinements.
As summarized in \cref{tab:diff}, \method  differs from---and considerably improves upon---query-driven histograms~\cite{aboulnaga1999self,lim2003sash,srivastava2006isomer,markl2005consistently,kaushik2009consistent,re2012understanding}
in terms of both modeling and training
(see \cref{sec:related} for a detailed comparison). 


\ignore{
\barzan{the following para is out of place. either move and merge it with where u discussed dynamic histograms or make it a footnote to where we say we re the first to do Mixture model}
Finally, we are aware of the selectivity estimation methods~\cite{heimel2015self,gunopulos2005selectivity,blohsfeld1999comparison} that rely on a technique called \emph{kernel density estimation} (KDE). Although KDE itself can be regarded as a special case of a mixture model, their objectives are creating \tofix{models after database scans; thus, they are fundamentally different from \method or from  selectivity learning in general}
\barzan{1) not english and completely incomprehensible.
2) what is db scans?? 3) if they re doing sel estimation then
they have the same objective anyways so makes no sense what u re saying} (see \cref{sec:related}).
}

\ph{Contributions} We make the following contributions:
\begin{enumerate}
\item We propose the first 
mixture model-based approach to
 selectivity estimation
(\cref{sec:model}).
\item 
For training the mixture model, we design a constrained optimization problem
(\cref{sec:training:prob}). 
\item We show that the proposed optimization problem 
can be reduced (from an exponentially complex one) 
to a quadratic program and present
further optimization strategies for solving it (\cref{sec:method:solving}).
\item We conduct extensive experiments 
 on two real-world datasets to 
 compare \method's performance and state-of-the-art selectivity estimation techniques  (\cref{sec:exp}).
\end{enumerate}

\ignore{
Data analysis with declarative languages is an important feature for both conventional RDMBS~\cite{oracle_online,postgres_online,stillger2001leo} and modern distributed data analytics systems~\cite{armbrust2015spark,meijer2006linq,facebook-presto}. For those systems, accurate cardinality estimations are crucial for finding an optimal query plan.
A study shows the errors in cardinality estimations caused more than 10$\times$ slowdowns for 8\% of benchmark queries and more than 100$\times$ slowdowns for 5\% of benchmark queries, for both open source and commercial database systems~\cite{leis2015good}. These slowdowns are more significant for large-scale data analysis; even a small relative efficiency difference can lead to a big absolute time difference.

In reality, accurate cardinality estimations are hard when (i) correlations exist among attributes or (ii) a query selectivity is low. These conditions make commonly used static cardinality estimation techniques (such as per-attribute histograms~\cite{jagadish2001global} or sampling based methods~\cite{leiscardinality,lipton1990practical}) inaccurate. In general, high-accuracy estimations demand more histogram bins or more sampled tuples; however, larger numbers of bins and sampled tuples inherently cause larger memory footprints and increased estimation times. Given fixed budgets of space and time, static cardinality estimation techniques are essentially suboptimal unless query workloads are completely uniform over the entire data, or future query workloads are fully known in advance (so, those techniques are optimized for those workloads). We do not make such strong assumptions.

\ph{Consistent max-entropy distribution}
In this work, we focus on learning-based cardinality estimation (LCE), a class of cardinality estimation techniques that dynamically adjust their performance according to query workloads. For LCE, it is effective to model an underlying data distribution using a maximum entropy distribution consistent with past query answers. The notion of maximum entropy enables us to select the most likely probability distribution given past query answers~\cite{srivastava2006isomer,markl2005consistently,kaushik2009consistent}. This approach is principled~\cite{jaynes1957information} and more accurate compared to other existing approaches~\cite{aboulnaga1999self,lim2003sash}.

Finding a maximum entropy distribution entails solving an optimization problem. Previous work used a dynamically-constructed histogram for representing an underlying data distribution, and the weights of the histogram bins were optimized by maximizing the entropy of the distribution. The major advantage of using a histogram is the simplicity it brings to the optimization process: the objective function of the optimization problem includes the terms that directly correspond to the histogram bins (see \cref{sec:fail:partition}). This is possible because the bins of a histogram cover disjoint regions (of a cross product of the domains of attributes); that is, there are zero intersections among the bins. The correspondence between the histogram bins and the terms in an objective function also implies that the objective function is only as complicated as the number of the histogram bins.

Since those histogram-based LCE methods represent an underlying data distribution using a histogram obtained by solving an optimization problem, the success of those methods is contingent on the quality of a preprocessing step that generates a bounded number of histogram bins (for efficiency) in a workload-sensitive manner (for high accuracy). However, this process is surprisingly hard. Several approaches we discuss in \cref{sec:voronoi} fail either because bins overlap or because those methods are not effective for the datasets with many attributes. Existing partitioning methods~\cite{srivastava2006isomer,bruno2001stholes} generate $O(2^n)$ bins, where $n$ is the number of past queries; thus, they make the optimization process prohibitively expensive. A graphical model~\cite{lim2003sash} is only straightforward for the datasets with discrete attributes. Note that we are not against developing more sophisticated techniques for workload-sensitive bin generations; however, our proposed approach obviates such efforts.

\ph{UMM and Contributions}
Our work introduces a uniform mixture model (UMM) in place of the histogram for representing an underlying data distribution. With the UMM, representing an underlying data distribution in a workload-sensitive manner becomes straightforward (\cref{sec:method:kernel}). The UMM is a variant of a Gaussian mixture model~\cite{ml-textbook}; UMM's probability density is computed as a summation of the probability densities of \emph{subpopulations}---weighted uniform distributions of different sizes at different locations. This choice of using uniform distributions for the subpopulations make the computations efficient. Also, UMM does not require ``zero intersection'' in contrast to histogram bins. This high flexibility makes the optimization problem exponentially expensive though: its objective function involves $O(2^n)$ terms, where $n$ is the number of past queries. However, we show we can reduce those terms to only $O(n^2)$ terms using a mathematical trick.

\tikzset{
    process box/.style={
        align=center,draw=black,minimum height=7mm,minimum width=10mm,font=\small,
    },
    process line/.style={
        draw=black,thick,->,
    }
}

\begin{figure}
    \centering
    
    \begin{subfigure}[b]{0.4\columnwidth}
    \centering
        \begin{tikzpicture}
        \def\r{15mm};
        \node[process box] (a1) at (0,0) {Solve\\optimization\\ problem};
        \node[process box] (a2) at (0,-2*\r) {Write\\optimization\\ problem};
        \node[process box] (a3) at (0,-3*\r) {Histogram\\ (Generate bins)};
        \node[process box,draw=none] (a4) at (0,-4*\r+2mm) {\textbf{query answers}};
        
        \draw[process line] (a4)--(a3) node[midway,right,font=\small] {$O(n)$};
        \draw[process line] (a3)--(a2) node[midway,right,font=\small] {$O(2^n)$};;
        \draw[process line] (a2)--(a1) node[midway,right,font=\small] {$O(2^n)$};;
        \end{tikzpicture}
        \caption{previous work}
    \end{subfigure}
    ~
    \begin{subfigure}[b]{0.4\columnwidth}
    \centering
        \begin{tikzpicture}
        \def\r{15mm};
        \def\y{40mm};
        \node[process box] (b1) at (\y,0) {Solve\\optimization\\ problem};
        \node[process box] (b2) at (\y,-1*\r) {Reduce\\analytically};
        \node[process box] (b3) at (\y,-2*\r) {Write\\optimization\\ problem};
        \node[process box] (b4) at (\y,-3*\r) {UMM};
        \node[process box,draw=none] (b5) at (\y,-4*\r+2mm) {\textbf{query answers}};
        
        \draw[process line] (b5)--(b4) node[midway,right,font=\small] {$O(n)$};
        \draw[process line] (b4)--(b3) node[midway,right,font=\small] {$O(n)$};
        \draw[process line] (b3)--(b2) node[midway,right,font=\small] {$O(2^n)$};
        \draw[process line] (b2)--(b1) node[midway,right,font=\small] {$O(n^2)$};
        \end{tikzpicture}
        \caption{\method}
    \end{subfigure}
    \caption{\method in the perspective of query optimization. Using a flexible model (UMM) causes the optimization problem construction to generate $O(2^n)$ terms. However, we reduce this complexity to $O(n^2)$.}
    \label{fig:intro:opt}
\end{figure}

In the query optimization perspective, our approach addresses the $O(2^n)$ complexity of existing partitioning methods (for histogram bins) by transferring the complexity of the bin generation to the optimization problem writing. We depict this concept in \cref{fig:intro:opt}. Due to this ``complexity transfer'', the optimization problem writing now generates $O(2^n)$ terms. However, we present a technique that reduces those $O(2^n)$ terms to $O(n^2)$ terms using analytic approximations. This process is possible because there is no materialization (in contrast to the bin generation process). After all, it suffices for a LCE component to compute those $O(n^2)$ terms. This computational efficiency enables our approach to build a precise data distribution by exploiting more query answers compared to previous work. From now on, we call this proposed LCE algorithm---a combination of the UMM and analytic objective function simplifications---\method.

\tofix{I should talk about actual error reductions or speedup compared to existing work. I defer those descriptions until experiments are done.}

The contribution of this work is as follows:
\begin{enumerate}
\item We introduce the uniform mixture model and show that we can obtain a maximum entropy distribution consistent with past query answers with $O(n^2)$ space costs.
\item We show empirically that \method is superior to previous LCE work and other static cardinality estimation methods.
\item We demonstrate that a database system with \method gradually improves the quality of its query plans as more queries are processed.
\end{enumerate}
}


\section{Preliminaries}
\label{sec:prelim}

In this section, we first define relevant notations in \cref{sec:prelim:notations} and then formally define the problem of \emph{query-driven selectivity estimation} in \cref{sec:prelim:prob}. 
Next, in \cref{sec:prelim:drawback}, we discuss the drawbacks of   previous approaches.

\subsection{Notations}
\label{sec:prelim:notations}

\begin{table}[t]
\caption{Notations.}
\label{tab:notations}

\vspace{-4mm}

\centering
\small
\begin{tabular}{l p{65mm}}
\toprule
Symbol & Meaning \\
\midrule
$T$                  & a table (or a relation) \\
$C_i$                & $i$-th column (or an attribute) of $T$; $i = 1, \ldots, d$ \\
$|T|$                & the number of tuples in $T$ \\
$[l_i, u_i]$         & the range of the values in $C_i$ \\
$x$                  & a tuple of $T$ \\
$B_0$                & the domain of $x$; $[l_1, u_1] \times \cdots \times [l_d, u_d]$ \\
$P_i$                & $i$-th predicate \\
$B_i$                & hyperrectangle range for the $i$-th predicate \\
$|B_i|$              & the size (of the area) of $B_i$ \\
$x \in B_i$          & $x$ belongs to $B_i$; thus, satisfies $P_i$ \\
$I(\cdot)$           & indicator function that returns 1 if its argument is true and 0 otherwise \\
$s_i$                & the selectivity of $P_i$ for $T$ \\
$(P_i, s_i)$         & $i$-th observed query \\
$n$                  & the total number of observed queries \\
$f(x)$               & probability density function of tuple $x$ (of $T$) \\
\bottomrule
\end{tabular}
\end{table}

 \cref{tab:notations} summarizes 
 the notations used throughout this paper.

\ph{Set Notations} $T$ is a relation that consists of $d$ real-valued columns $C_1, \ldots, C_d$.\footnote{Handling integer and categorical columns is discussed in \cref{sec:prelim:prob}.}
The range of values in $C_i$ is $[l_i, u_i]$ and 
the cardinality (i.e., row count) of $T$ is
$N$=$|T|$. The tuples in $T$ are denoted by $x_1, \ldots, x_N$,  where each $x_i$ is a size-$d$ vector that belongs to 
$B_0 = [l_1, u_1] \times \cdots \times [l_d, u_d]$. 
Geometrically, $B_0$ is the area bounded by a hyperrectangle whose bottom-left corner is $(l_1, \ldots, l_d)$ and top-right corner is $(u_1, \ldots, u_d)$. The size of $B_0$ can thus be computed as $|B_0|$=$(u_1 - l_1) \times \cdots \times (u_d - l_d)$.

\ph{Predicates}  We use $P_i$ to denote the (selection) predicate   of the $i$-th query on $T$. In this paper, a predicate is a 
conjunction\footnote{See \cref{sec:types} for a discussion of disjunctions and negations.}
of one or more \emph{constraints}.
Each constraint is a \emph{range constraint}, which can be 
one-sided (e.g., $3\leq C1$) or two-sided (e.g.,  $-3\leq C1\leq 10$). 
This range can be extended to also handle equality constraints on categorical data (see \cref{sec:types}).
 Each predicate $P_i$ is represented by a hyperrectangle $B_i$. For example, a constraint ``$1 \le C1 \le 3$ AND $2 \le C2$'' is represented by a hyperrectangle $(1, 3) \times (2, u2)$, where 
 $u2$ is the upper-bound of $C2$.
We use $P_o$ to denote an empty predicate, i.e., one that selects all tuples.


\ph{Selectivity}
The \emph{selectivity} $s_i$ of $P_i$ is defined as the \emph{fraction} of the rows of $T$ that satisfy the predicate. That is, $s_i = (1/N) \sum_{k=1}^N I(x_k \in B_i)$,
where $I(\cdot)$ is the indicator function. 
A pair $(P_i, s_i)$ is referred to as an \emph{observed query}.\footnote{This pair is also 
referred to as an \emph{assertion} by
prior work~\cite{re2010understanding}.} Without loss of generality, we assume that $n$  queries have been observed for $T$ and seek to estimate $s_{n+1}$.  
Finally, we use $f(x)$ to denote the joint probability density function of tuple $x$ (that has generated tuples of $T$).




\subsection{Problem Statement}
\label{sec:prelim:prob}
\label{sec:types}

Next, we formally state the problem:


\begin{problem}[Query-driven Selectivity Estimation]
Consider a set of $n$ observed queries ${(P_1, s_1), \allowbreak \ldots, \allowbreak (P_n, s_n)}$ for $T$.
By definition, we have the following for each $i=1,\ldots,n$:
\[
\int_{x \in B_i} f(x) \; dx = s_i
\]
Then, our goal is to build a model of $f(x)$ that can  estimate the selectivity $s_{n+1}$ of a new predicate $P_{n+1}$.
\label{prob:sec}
\end{problem}

\vspace{2mm}

Initially, before any query is observed,
    we can conceptually consider a default query $(P_0, 1)$, 
    where all tuples are selected and hence, the 
      selectivity is 1 (i.e., no predicates).


\ph{Discrete and Categorical Values}
\cref{prob:sec} can be extended 
to support discrete attributes (e.g., integers, characters,
categorical values) and  equality constraints on them, as follows. Without loss of generality, suppose that $C_i$ contains the integers in $\{ 1, 2, \ldots, b_i \}$. 
 To apply the solution to \cref{prob:sec}, it suffices to (conceptually) 
 treat these integers as real values in $[1, b_i + 1]$ and then convert the original constraints on the integer values into range constraints, as follows.
A constraint of the form ``$C_i = k$'' 
will be converted to a range constraint of the form 
$k \le C_i < k+1$. 
 Mathematically, this is equivalent to replacing a probability mass function with a probability density function defined using
\emph{dirac delta functions}.\footnote{A dirac delta function $\delta(x)$ outputs $\infty$ if $x = 0$ and outputs 0 otherwise while satisfying $\int \delta(x)\, dx = 1$.} 
Then, the summation of the original probability mass function can   be converted to the integration of the probability density function.
String data types (e.g., char, varchar) and their equality constraints can be similarly supported, by conceptually mapping each
string into an integer (preserving their order) and applying the {conversion described above for the integer data type. 

\ph{Supported Queries}
Similar to prior work~\cite{bruno2001stholes,aboulnaga1999self,srivastava2006isomer,markl2005consistently,kaushik2009consistent,re2012understanding,markl2007consistent,lim2003sash}, we support  selectivity estimation for 
predicates with conjunctions, negations, and disjunctions of   range and equality constraints on numeric and categorical columns.
We currently do not support  wildcard constraints (e.g., \texttt{LIKE '*word*'}), \texttt{EXISTS} constraints, or \texttt{ANY} constraints.
 In practice, often a \emph{fixed selectivity} is used for unsupported predicates,
e.g., 3.125\% in Oracle~\cite{oracle_cbo}.

To simplify our presentation, we focus on conjunctive predicates.
However, negations and disjunctions can also be easily supported.
This is because our algorithm only requires the ability to 
compute the intersection size of pairs of predicates $P_i$ and $P_j$,
    which can be done
    by converting 
    $P_i \land P_j$ into  a disjunctive normal form
    and then using  the inclusion-exclusion principle
            to compute its size.

As in the previous work, 
we focus our presentation on predicates on a single relation.
However, any selectivity estimation technique for a single relation  can be applied to estimating selectivity of a join query whenever the predicates on
the individual relations are independent of the join conditions~\cite{spark_cbo,postgres_cbo,selinger1979access,swami1994estimation}.

\ignore{
\ph{Criteria for Selectivity Learning}
\yongjoo{This section can be removed.}
In solving the problem of selectivity learning (\cref{prob:sec}), we adopt the following general criteria used by adaptive histograms:
\begin{enumerate}
\item \emph{Consistency}: The selectivity estimate must be consistent with the given assertions. That is, if $P_{n+1} = P_k$, then the estimate for $s_{n+1}$ must be $s_k$.
\item \emph{Uniformity}: Given no extra information, a uniform distribution must be preferred. For example, if we only know that the selectivity for ``$0 < C_1 < 2$'' is 0.5, the selectivity estimate for ``$0 < C_1 \le 1$'' must be 0.25.
\end{enumerate}

The previous work~\cite{srivastava2006isomer,markl2005consistently,kaushik2009consistent,re2012understanding,markl2007consistent} pursues uniformity by seeking a maximum entropy distribution. In contrast, \method pursues uniformity by seeking the distribution closest to the uniform distribution (in terms of mean squared error).
}



\ignore{
\subsection{Join Selectivity and Other Data Types}
\label{sec:prelim:join}

Query-driven selectivity estimation can be generalized in two orthogonal directions: (1) estimating the selectivity of a query including joins and (2) supporting more data types.

\ph{Selectivity for Join\texttt{+}Selection}
\cref{prob:sec} concerns the selectivity only for a single relation. However, the selectivity of a joined relation can also be computed by combining the selectivity estimates on a single relation and the selectivity of join conditions, under the assumption that the join conditions are independent of the selection predicates placed on individual tables~\cite{oracle_cbo}. For example, suppose that two predicates $P_1$ and $P_2$ are respectively placed on two relations $T_1$ and $T_2$, which are inner-joined on $T_1.C = T_2.C$.
Then, one heuristic rule for the inner-join selectivity is $s_{join} = 1 / \max(|\text{distinct}(T1.C)|, |\text{distinct}(T2.C)|)$~\cite{spark_cbo}. 
Let $s_1$ and $s_2$ be the selectivities of $P_1$ and $P_2$, respectively. Then, the number of the tuples of the joined relation can be computed as $|T_1| \cdot |T_2| \cdot s_{join} \cdot s_1 \cdot s_2$. This approach can also be applied to the other join selectivity rules~\cite{postgres_cbo,haas1996selectivity,van1993multiple}.
}

\subsection{Why not Query-driven Histograms}
\label{sec:prelim:drawback}

In this section, we briefly describe how query-driven histograms work~\cite{srivastava2006isomer,bruno2001stholes,markl2005consistently,kaushik2009consistent,aboulnaga1999self,lim2003sash,re2012understanding}, and then discuss their limitations, which motivate our work.

\ph{How Query-driven Histograms Work}
To approximate $f(x)$ (defined in \cref{prob:sec}), 
query-driven histograms adjust their bucket boundaries and bucket frequencies according to 
the queries they observe.
Specifically, they first determine bucket boundaries (\emph{bucket creation} step), 
and then compute their frequencies (\emph{training} step), as described next.

\begin{enumerate}
\item \emph{Bucket Creation}: Query-driven histograms determine their bucket boundaries based
on the given predicate's ranges~\cite{aboulnaga1999self,srivastava2006isomer,markl2005consistently,kaushik2009consistent,re2012understanding}. If the range of a later predicate overlaps with that of an earlier predicate, they split the bucket(s) created for the earlier one   into two or more buckets in order to ensure that the buckets do not overlap with one another.
\cref{fig:isomer_partition} shows an example of this bucket splitting operation.

\item \emph{Training}: 
After creating the buckets,   query-driven histograms assign frequencies to those buckets.
Early work~\cite{aboulnaga1999self,lim2003sash} 
determines bucket frequencies in the process of bucket creations. That is, when a bucket is split into two or more, the frequency of the original bucket is also split (or adjusted),
    such that it minimizes the estimation error for the latest observed query.
 
However, since this process does not minimize the (average) error across multiple queries, 
    their estimates are much less accurate.
 More recent work~\cite{srivastava2006isomer,markl2005consistently,kaushik2009consistent,re2012understanding} has addressed this limitation by
explicitly solving an optimization problem based on the maximum entropy principle.
That is, they search for   bucket frequencies that maximize the \emph{entropy} of the distribution while remaining consistent with the actual selectivities observed.
\end{enumerate}

\noindent 
 Although using the maximum entropy principle will lead to highly accurate estimates, 
    it still suffers from two key limitations.

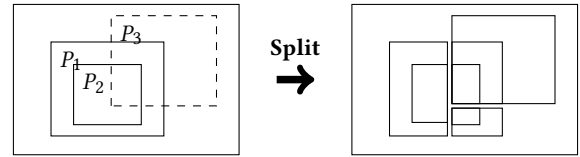
\begin{figure}[tb]
    \centering
    \begin{tikzpicture}
    
    \draw[] (0,0) rectangle (3,2);
    \draw[] (0.5,0.25) rectangle (2,1.5);   
    \draw[] (0.8,0.4) rectangle (1.7,1.2);  
    \draw[dashed] (1.3,0.65) rectangle (2.7,1.85);  
    \node[anchor=north west,font=\small] at (0.5,1.5) {$P_1$};
    \node[anchor=north west,font=\small] at (0.8,1.2) {$P_2$};
    \node[anchor=north west,font=\small] at (1.3,1.85) {$P_3$};
    
    \draw[->,line width=1.0mm] (3.5,1) -- (4.0,1) 
      node[midway,yshift=4mm,font=\small\bf] {Split};
    
    \def\r{4.5};
    \def\g{0.03};
    \draw[] (\r,0) rectangle (3+\r,2);      
    
    \draw[] (\r+0.5,0.25) rectangle (\r+1.3-\g, 1.5);
    \draw[] (\r+0.8,0.4+\g) rectangle (\r+1.3-\g, 1.2);
    
    \draw[] (\r+1.3+\g,0.25) rectangle (\r+2,0.65-\g);
    \draw[] (\r+1.3+\g,0.4) rectangle (\r+1.7, 0.65-\g);
    
    \draw[] (\r+1.3+\g,0.65+\g) rectangle (\r+2.7,1.85);  
    \draw[] (\r+1.3+\g,0.65+\g) rectangle (\r+1.7,1.2);
    \draw[] (\r+1.3+\g,0.65+\g) rectangle (\r+2,1.5);
    
    \end{tikzpicture}
    
\caption{Bucket creation for query-driven histograms. $P_3$ is the range of a new predicate. The existing buckets (for $P_1$ and $P_2$) are split into multiple buckets. The total number of buckets may grow \emph{exponentially} as more queries are observed.}
\label{fig:isomer_partition}

\vspace{-4mm}

\end{figure}

\ph{Limitation 1: Exponential Number of Buckets}
Since existing buckets may split into multiple ones for each new observed query, 
the number of buckets can potentially grow exponentially as the number of observed queries grows.
For example, in our experiment in \cref{sec:exp:model},
    the number of buckets was 22,370 for 100 observed queries, and 318,936 for 300 observed queries. 
Unfortunately, the number of buckets directly affects the training time. 
Specifically, using \emph{iterative scaling}---the optimization algorithm used by all  previous work~\cite{srivastava2006isomer,markl2005consistently,kaushik2009consistent,markl2007consistent,re2012understanding}---
    the cost of each iteration grows linearly with the number of variables (i.e., the number of buckets).
    This means that the cost of each iteration can grow exponentially with the number of observed queries.

As stated in \cref{sec:intro}, we address this problem by   employing a \emph{mixture model}, which can express  a probability distribution more effectively than query-driven histograms.
Specifically, our empirical study in \cref{sec:exp:model} shows that---using the same number of parameters---a mixture model achieves considerably more accurate estimates than histograms.

\ph{Limitation 2: Non-trivial Bucket Merge/Pruning}
Gi-ven that query-driven histograms~\cite{srivastava2006isomer,markl2005consistently} quickly 
become infeasible due to their large number of buckets,
    one might consider merging or pruning the buckets in an effort to
    reduce their training times.
However, merging or pruning the histogram buckets violates the assumption used by their optimization algorithms, i.e., iterative scaling.
Specifically, iterative scaling relies on the fact that a bucket is either \emph{completely included} in a query's predicate range or \emph{completely outside} of it.\footnote{For example, this property is required for the transition from Equation (6) to Equation (7) in \cite{markl2005consistently}.} That is, no partial overlap is allowed. This property must hold for each of the $n$ 
 predicates.
However, merging some of the buckets will inevitably cause  partial overlaps 
(between   
predicate  and histogram buckets). 
For interested readers, 
we have included a more detailed explanation of 
 why iterative scaling requires this assumption 
 in ~\cref{sec:appendix}.


\ignore{
The remedies proposed to addresses this issue still have limitations.
Early work proposes to merges buckets~\cite{aboulnaga1999self,bruno2001stholes}; however, this approach leads to heuristic weight update rules, which is known to be less accurate~\cite{srivastava2006isomer}.
Recent work proposes to discard most \emph{uninformative} assertions~\cite{srivastava2006isomer,markl2005consistently}; however, discarding assertions lowers the estimation accuracy.
In contrast, our method (\method) completely avoids the creation of an exponential number of buckets in the first place by introducing an alternative model (\cref{sec:method}).
The size of our optimization problem (i.e., the number of the variables) is only $O(n)$.
Furthermore, there is a theoretical evidence that our approach can approximate a probability density function more precisely than histograms given the same amount of space budget~\cite{kde_vs_hist}.\footnote{The cited work compares histograms and kernel density estimation (KDE); however, KDE is a special case of our approach.}
}

\ignore{
\ph{Drawback 2: Slow training}
To simultaneously pursue both uniformity and consistency, recent work~\cite{srivastava2006isomer,markl2005consistently,kaushik2009consistent,re2012understanding} maximizes the \emph{entropy} of $f(x)$ while enforcing that the selectivities computed on $f(x)$ are consistent with the given assertions.
This constrained optimization problem is solved using a technique called \emph{iterative scaling}~\cite{huang2010iterative}. Iterative scaling gradually updates parameter values (i.e., bucket weights) until convergence. However, as our empirical study in \cref{sec:exp} shows, the convergence speed of iterative scaling is \emph{slow}, which leads to long training time for SL. In contrast, we show that a much faster optimization technique---\emph{quadratic programming}---can be used by formulating an optimization problem in a way that directly minimizes the distance between $f(x)$ and the uniform distribution (instead of maximizing the entropy of $f(x)$). Despite this training speedup, our method also pursues both uniformity and consistency simultaneously. Moreover, minimizing the distance between $f(x)$ and the uniform distribution is closely related to maximizing the entropy of $f(x)$ (\cref{sec:disc:entropy}).
}



\section{QuickSel: Model}
\label{sec:method}
\label{sec:model}

This section presents how \method models the population distribution and estimates the selectivity of a new query. \method's model relies on a probabilistic model called a \emph{mixture model}.
In \cref{sec:method:model}, we describe the mixture model employed by \method. \cref{sec:method:est} describes how to estimate the selectivity of a query using the mixture model. 
\cref{sec:model:sub} describes the details of \method's mixture model construction.

\input{fig_tab/fig_workload_kernels}

\subsection{Uniform Mixture Model}
\label{sec:method:model}

A mixture model is a probabilistic model that expresses a (complex) probability density function (of the population) as a combination of (simpler) probability density functions (of \emph{subpopulations}). The population distribution is the one that generates the tuple $x$ of $T$. The subpopulations are internally managed by \method to best approximate $f(x)$.

\ph{Uniform Mixture Model}
\method uses a type of mixture model, called \emph{the uniform mixture model}.
The uniform mixture model represents a population distribution $f(x)$ as a weighted summation of multiple uniform distributions, $g_z(x)$ for $z = 1, \ldots, m$. Specifically,
\begin{align}
f(x) = \sum_{z=1}^m \, h(z) \, g_z(x) = \sum_{z=1}^m \, w_z \, g_z(x)
\label{eq:mm}
\end{align}
where $h(z)$ is a categorical distribution that determines the weight of the $z$-th subpopulation, and $g_z(x)$ is the probability density function (which is a uniform distribution) for the $z$-th subpopulation. The support of $h(z)$ is the integers ranging from 1 to $m$; $h(z) = w_z$. The support for $g_z(x)$ is represented by a hyperrectangle $G_z$. Since $g_z(x)$ is a uniform distribution, $g_z(x) = 1 / |G_z|$ if $x \in G_z$ and 0 otherwise. The locations of $G_z$ and the values of $w_z$ are determined in the training stage (\cref{sec:training}). In the remainder of this section (\cref{sec:model}), we assume that $G_z$ and $w_z$ are given.

\ph{Benefit of Uniform Mixture Model}
The uniform mixture model was studied early in the statistics community~\cite{craigmile1997parameter,gupta1978uniform}; however, recently, a more complex model called \emph{the Gaussian mixture model} has received more attention~\cite{ml-textbook,yang2015compressive,ragothaman2016unsupervised}.\footnote{There are other variants of mixture models~\cite{moser2015simultaneous,asparouhov2016structural}.} The Gaussian mixture model uses a Gaussian distribution for each subpopulation; the smoothness of its probability density function (thus, differentiable) makes the model more appealing when gradients need to be computed. Nevertheless, we \emph{intentionally} use the uniform mixture model for \method due to its computational benefit in the training process, as we describe below.

As will be presented in \cref{sec:method:solving}, \method's training involves the computations of the intersection size between two subpopulations, for which the essential operation is evaluating the following integral:
\[
\int \, g_{z1}(x) \, g_{z2}(x) \; dx
\]
Evaluating the above expression for multivariate Gaussian distributions, e.g., $g_{z1}(x) = \exp \left( \minus x^\top \Sigma^{\minus 1} x \right) / \sqrt{(2 \pi)^d\, |\Sigma|}$, requires numerical approximations~\cite{genz1992numerical,joe1995approximations}, which are either slow or inaccurate. In contrast, the intersection size between two hyperrectangles can be exactly computed by simple min, max, and multiplication operations.

\subsection{Selectivity Estimation with UMM}
\label{sec:method:est}

For the uniform mixture model, computing the selectivity of a predicate $P_i$ is straightforward:

\begin{align*}
\int_{B_i} & \, f(x) \, dx
= \int_{B_i}\, \sum_{z=1}^m\, w_z \, g_z(x) \, dx
= \sum_{z=1}^m \, w_z \, \int_{B_i}\, g_z(x) \, dx \\
&= \sum_{z=1}^m \, w_z \, \int \, \frac{1}{|G_z|} \, I(x \in G_z \cap B_i) \, dx
= \sum_{z=1}^m \, w_z \, \frac{|G_z \cap B_i|}{|G_z|}
\end{align*}

Recall that both $G_z$ and $B_i$ are represented by hyperrectangles. Thus, their intersection is also a hyperrectangle, and computing its size is straightforward.

\ignore{
The integration, $\int_{P_i} g_j(x) dx$, can only be computed after determining $g_j(x)$.
For the reason we describe will below, \method uses a hyperrectangle $R_j$ for $g_j(x)$. Specifically, let $R_j = (l_1, u_1) \times \cdots \times (l_d, u_d)$ where $(l_i, u_i)$ is the range for the $i$-th dimension. We say $x \in R_j$ if $x \in [l_i, u_i]$ for all $i = 1, \ldots, d$. Then, $g_j(x) = 1$ if $x \in R_j$ and 0 otherwise.
Let $P_i \cap R_j$ be the intersection between $P_i$ and $R_j$, and let $|P_i \cap R_j|$ be the size of the intersection. Since $\int_{P_i} g_j(x) dx = |P_i \cap R_j| / |R_j|$, the selectivity for $P_i$ is simplified to:
\begin{align*}
\int_{P_i} \, f(x) \, dx = \sum_{j=1}^m \, w_j \, \frac{|P_i \cap R_j|}{|R_j|}
\end{align*}
}

\subsection{Subpopulations from Observed Queries}
\label{sec:model:sub}

We describe \method's approach to determining the boundaries of $G_z$ for $z = 1, \ldots, m$. 
Note that determining $G_z$ is orthogonal to the model training process, which we describe in \cref{sec:training}; thus, even if one devises an alternative approach to creating $G_z$, our fast training method is still applicable.

\method creates $m$ hyperrectangular ranges\footnote{The number $m$ of subpopulations is set to $\min(4 \cdot n, 4,000)$, by default.} (for the supports of its subpopulations) in a way that satisfies the following simple criterion: if more predicates involve a point $x$, use a larger number of subpopulations for $x$.
Unlike query-driven histograms, \method can easily pursue this goal by exploiting the property of a mixture model: the supports of subpopulations may overlap with one another.

In short, \method generates multiple points (using predicates) that represent the query workloads and create hyperrectangles that can sufficiently cover those points.
Specifically, we propose two approaches for this:  a sampling-based one and a clustering-based one. The sampling-based approach is faster; the clustering-based approach is more accurate. Each of these is described in more detail below.

\ph{Sampling-based}
This approach performs the following operations for creating $G_z$ for $z = 1, \ldots, m$.
\begin{enumerate}
\item 
Within each predicate range, generate multiple random points $r$.
Generating a large number of random points increases the consistency; however, \method limits the number to 10 since having more than 10 points did not improve accuracy in our preliminary study.
\item Use simple random sampling to reduce the number of points to $m$, which serves as the centers of $G_z$ for $z = 1, \ldots, m$.
\item 
The length of the $i$-th dimension of $G_z$ is set to twice the average of the distances (in the same $i$-th dimension) to the 10 nearest-neighbor centers.
\end{enumerate}

\noindent 
\cref{fig:model:creation} illustrates how the subpopulations are created using both (1) highly-overlapping query workloads and (2) scattered query workloads. In both cases, \method generates random points to represent the distribution of query workloads, which is then used to create $G_z$ ($z = 1, \ldots, m$), i.e., the supports of subpopulations. 
This sampling-based approach is faster, but it does not ensure the coverage of all random points $r$. In contrast, the following clustering-based approach ensures that.

\ph{Clustering-based}
The second approach relies on a clustering algorithm for generating hyperrectangles:
\begin{enumerate}
\item Do the same as the sampling-based approach.

\item Cluster $r$ into $m$ groups. (We used K-means++.)

\item For each of $m$ groups, we create the smallest hyperrectangle $G_z$ that covers all the points belonging to the group.
\end{enumerate}

\noindent Note that since each $r$ belongs to a cluster and
we have created a hyperrectangle that fully covers each cluster,
the union of the hyperrectangles covers all $r$.
%
Our experiments primarily use the sampling-based approach due to its efficiency, but we also compare them empirically in \cref{sec:exp:opt}.

\vspace{2mm}
The following section describes how to assign the weights (i.e., $h(z) = w_z$) of these subpopulations.


\ignore{
\subsection{SELL's Mixture Model}
\label{sec:method:ours}

\method's learning behavior starts from its workload-sensitive generation of the UMM's subpopulations. Formally, \method generates the \emph{supports} of those subpopulations in a workload-sensitive manner; and the weights of those subpopulations are optimized according to the given selectivity assertions (see \cref{sec:method:optimal}). Here, we describe how \method generates those supports. Note that each of those supports are represented by a hyperrectangle defined in the domain of attributes.

Let a dataset has $d$ attributes. The domain of a data item in the dataset is the cross product of the domains of those $d$ attributes, i.e., $(min_1, max_1) \allowbreak \times \allowbreak \cdots \times (min_d, max_d)$. Similarly, a hyperrectangle $B_i$ is represented by the cross product of $d$ intervals, i.e., $(u_{i,1},\, v_{i,1}) \times \cdots \times (u_{i,d},\, v_{i,d})$.  For instance, the rectangles in \cref{fig:subpopulations} are expressed as $(0.4,\, 0.8) \times (0,\,1) \allowbreak \times (min_3, max_3) \times \cdots \times (min_d, max_d)$ and $(0,\,1) \times (0.33,\, 0.66) \allowbreak \times (min_3, max_3) \times \cdots \times (min_d, max_d)$, respectively.

The objective of \method's subpopulation generation is to be effective for frequently queried ranges and for frequently queried attributes. For this, \method generates hyperrectangles by dividing each of the range constraints specified in a given selection predicate into $k$ sub-ranges, while using the entire domain for the remaining attributes (i.e., the attributes that are not specified in the selection predicate). For instance, let a selection predicate be a range $(u_{i,1},\, v_{i,1})$ (on the first attribute). Let $c = v_{i,1} - u_{i,1}$. Then, \method generates the following $k$ hyperrectangles:
\begin{itemize}
\item $(u_{i,1},\  u_{i,1} + c/k) \times (min_2,\, max_2) \times \cdots \times (min_d,\, max_d)$
\item $(u_{i,1} + c/k,\  u_{i,1} + 2 \cdot c/k) \times (min_2,\, max_2) \times \cdots \times (min_d,\, max_d)$ 
\item this list continues up to
\item $(u_{i,1} + (k-1)\cdot c/k,\  u_{i,1} + c) \times (min_2,\, max_2) \times \cdots \times (min_d,\, max_d)$ 
\end{itemize}
\Cref{fig:subpopulations} depicts this idea. On the left subfigure, we visualize the two selection predicates ($s_1$ and $s_2$) and the two rectangular ranges selected by those selection predicates, respectively. The first selection predicate, $s_1$, is on $A_1$; the second selection predicate, $s_2$, is on $A_2$. The right figure shows the hyperrectangles (for the supports of subpopulations) generated by \method. \method divides each of the ranges specified in a selection predicate into $k$ sub-intervals ($k = 2$ in this example). This process generates four (intersecting) hyperrectangles in total:  $R_1$ and $R_2$ for $s_1$; $R_3$ and $R_4$ for $s_2$, as depicted in the right subfigure. In this way, those subpopulations can provide higher resolutions for frequently queried ranges and frequently queried attributes.

Note that this subpopulation generation approach is significantly simpler than the existing partitioning methods~\cite{??,??}. This simplicity is only possible because the UMM allows the subpopulations to intersect one another. This approach is unavailable in previous histogram-based LSE methods.
}


\section{QuickSel: Model Training}
\label{sec:method:optimal}
\label{sec:training}


This section describes how to compute the weights $w_z$ of \method's subpopulations.
For training its model, \method finds the model that maximizes uniformity while being consistent with the observed queries. In \cref{sec:training:prob}, we formulate an optimization problem based on this criteria. 
Next, \cref{sec:method:solving} presents how to solve the problem efficiently.

\subsection{Training as Optimization}
\label{sec:training:prob}

This section formulates an optimization problem for \method's training.
Let $g_0(x)$ be the uniform distribution with support $B_0$; that is, $g_0(x) = 1 / |B_0|$ if $x \in B_0$ and 0 otherwise. \method aims to find the model $f(x)$, such that the difference between $f(x)$ and $g_0(x)$ is minimized while being consistent with the observed queries.

There are many metrics that can measure the distance between two probability density functions $f(x)$ and $g_0(x)$, such as the earth mover's distance~\cite{rubner2000earth}, Kullback-Leibler divergence~\cite{kullback1951information}, the mean squared error (MSE), the Hellinger distance, and more. Among them, \method uses MSE (which is equivalent to $L2$ distance between two distributions) since it enables the reduction of our originally formulated optimization problem (presented shortly; \cref{prob:opt}) to a quadratic programming problem, which can be solved efficiently by many off-the-shelf optimization libraries~\cite{andersen2013cvxopt,apache_commons_opt,joptimizer,matlab_quadprog}. Also, minimizing MSE between $f(x)$ and $g_0(x)$ is closely related to maximizing the entropy of $f(x)$~\cite{srivastava2006isomer,markl2005consistently,kaushik2009consistent,re2012understanding}. 
See \cref{sec:rel_to_max_entropy} for the explanation of this relationship.

MSE between $f(x)$ and $g_0(x)$ is defined as follows:
\[
\text{MSE}(f(x), g_0(x)) = \int \, \left( f(x) - g_0(x) \right)^2 \, dx
\]
Recall that the support for $g_0(x)$ is $B_0$.
Thus, \method obtains the optimal weights by solving the following problem.

\vspace{2mm}

\begin{problem}[\method's Training]
\label{prob:opt}
\method obtains the optimal parameter $\ww$ for its model by solving:
\begin{align}
\argmin_{\ww}\; & \int_{x \in B_0} \, 
  \left( f(x) - \frac{1}{|B_0|} \right)^2 \; dx \label{eq:prob:density} \\
\text{such that } & \int_{B_i} f(x) \; dx = s_i \; \text{ for } i = 1, \ldots, n \\
& f(x) \ge 0 \label{eq:prob:cond}
\end{align}
Here, (\ref{eq:prob:cond}) ensures  $f(x)$ is a proper probability density function.
\end{problem}


\ignore{
In principle, there is no particular reason to prefer one to another; however, we use a variant of Kullback-Leibler divergence because it enables the conversion to a \emph{quadratic programming}~\cite{??} (\cref{thm:quadratic}), and thus efficient model training is possible. Also, minimizing our distance metric is closely related to maximizing \emph{entropy} (\cref{sec:disc:entropy}), which is the approach used by several previous works~\cite{??}.

Specifically, the variant of Kullback-Leibler (KL) divergence we use is expressed as follows:
\begin{align*}
\underbrace{\int\, f(x) \, \log \left( \frac{f(x)}{u(x)} \right) \; dx}_\text{KL divergence}
&\approx
\underbrace{\int\, f(x) \, \left( \frac{f(x)}{u(x)} - 1 \right) \; dx}_\text{Approximate KL divergence}
\end{align*}
The left-hand side of the above equation is the standard definition of the KL divergence and the right-hand side is our distance metric. The approximation is made using the first-order Taylor expansion: $\log x = x - 1$.
}

\subsection{Efficient Optimization}
\label{sec:method:solving}

We first describe the challenges in solving \cref{prob:opt}. Then, we describe how to overcome the challenges.

\ph{Challenge}
Solving \cref{prob:opt} in a na\"ive way is computationally intractable. For example, consider a mixture model consisting of (only) two subpopulations represented by $G_1$ and $G_2$, respectively. Then, $\int_{x \in B_0} (f(x) - g_0(x))^2 \,dx$ is:
\begin{align*}
&\int_{x \in G_1 \cap G_2} 
  \left( \frac{w_1 + w_2}{|G_1 \cap G_2|} - g_0(x) \right)^2 \, dx \\
+ & \int_{x \in G_1 \cap \neg G_2} 
  \left( \frac{w_1}{|G_1 \cap \neg G_2|} - g_0(x) \right)^2 \, dx \\
+ & \int_{x \in \neg G_1 \cap G_2} 
  \left( \frac{w_2}{|\neg G_1 \cap G_2|} - g_0(x) \right)^2 \, dx \\
+ & \int_{x \in \neg G_1 \cap \neg G_2} 
  \left( \frac{0}{|\neg G_1 \cap \neg G_2|} - g_0(x) \right)^2 \, dx \\
\end{align*}
Observe that with this approach, we need four separate integrations only for two subpopulations.
In general, the number of integrations is $O(2^m)$, which is $O(2^n)$.
Thus, this direct approach is computationally intractable.

\ph{Conversion One: Quadratic Programming}
\cref{prob:opt} can be solved efficiently by exploiting the property of the distance metric of our choice (i.e., MSE) and the fact that we use uniform distributions for subpopulations (i.e., UMM). 
The following theorem presents the efficient approach.


\vspace{2mm}

\begin{theorem}
\label{thm:quadratic}
The optimization problem in \cref{prob:opt} can be solved by the following quadratic optimization:
\begin{align*}
\argmin_{\ww} \quad & \ww^\top Q\, \ww \\
\text{such that} \quad & A \ww = \ss, \quad
  \ww \succcurlyeq \bm{0}
\end{align*}
where
\begin{align*}
(Q)_{ij} &= \frac{|G_i \cap G_j|}{|G_i| |G_j|}
\qquad
(A)_{ij} = \frac{|B_i \cap G_j|}{|G_j|}
\end{align*}
The bendy inequality sign ($\succcurlyeq$) means that every element of the vector on the left-hand side is equal to or larger than the corresponding element of the vector on the right-hand side.
\end{theorem}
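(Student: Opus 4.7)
The plan is to expand the objective $\int_{B_0} (f(x)-1/|B_0|)^2\, dx$ into three terms, a pure quadratic in $\ww$, a linear term in $\ww$, and a constant, and to show that the linear term reduces to a constant under the observed query constraints. Concretely, writing out the square gives
\begin{align*}
\int_{B_0}\! \left(f(x) - \tfrac{1}{|B_0|}\right)^2\! dx
= \int_{B_0}\! f(x)^2\, dx \;-\; \tfrac{2}{|B_0|}\!\int_{B_0}\! f(x)\, dx \;+\; \tfrac{1}{|B_0|}.
\end{align*}
The third term is constant. For the middle term, I would appeal to the default observation $(P_0,1)$ mentioned just after Problem~\ref{prob:sec}, together with $G_z \subseteq B_0$, which forces $\int_{B_0} f(x)\, dx = \sum_z w_z = 1$ on the feasible set; hence the linear term is also constant on the feasible set and may be dropped from the objective without changing the argmin.

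Next I would reduce the quadratic term, which is the step that the paper flags as naively involving $2^m$ region integrals. The key observation is that one does not have to partition $B_0$ at all: by bilinearity of integration,
\begin{align*}
\int_{B_0}\! f(x)^2\, dx
= \sum_{i=1}^m \sum_{j=1}^m w_i w_j \int g_i(x)\, g_j(x)\, dx,
\end{align*}
and because each $g_z$ is uniform on $G_z$ with density $1/|G_z|$, the inner integral collapses to $|G_i\cap G_j|/(|G_i|\,|G_j|)$, exactly the stated $(Q)_{ij}$. So this piece equals $\ww^\top Q\, \ww$, matching the objective in the theorem. For the equality constraints, I would use the derivation already given in Section~\ref{sec:method:est}: $\int_{B_i} f(x)\, dx = \sum_j w_j\, |B_i\cap G_j|/|G_j|$, which is exactly $(A\ww)_i$, and setting this equal to $s_i$ for all $i$ gives $A\ww = \ss$.

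Finally, the positivity condition $f(x) \ge 0$ in Problem~\ref{prob:opt} needs to be replaced by $\ww \succcurlyeq \bm{0}$. I would justify this by noting that $\ww \succcurlyeq \bm{0}$ is a \emph{sufficient} condition: since each $g_z(x) \ge 0$, any nonnegative combination $f(x) = \sum_z w_z g_z(x)$ is nonnegative on all of $B_0$, so every feasible $\ww$ of the quadratic program is feasible for Problem~\ref{prob:opt}. I would flag here that this is the one nontrivial modeling step, because it is strictly more restrictive than $f(x) \ge 0$ (a pointwise cancellation between overlapping subpopulations could in principle yield a nonnegative $f$ with some $w_z<0$); the theorem as stated silently adopts this sufficient-condition reformulation. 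The main obstacle in the whole argument is therefore not the algebra, which is essentially bilinear expansion, but (i) justifying why the linear term collapses, which relies on $P_0$ being implicitly present and on the subpopulation supports lying inside $B_0$, and (ii) arguing that tightening $f \ge 0$ to $\ww \succcurlyeq \bm{0}$ is acceptable; once these two points are granted, the reduction to the stated QP is immediate.
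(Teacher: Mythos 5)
Your proof is correct and follows essentially the same route as the paper's: drop the constant and linear terms of the expanded square using $\int f(x)\,dx = 1$, collapse the double sum over subpopulations to $\ww^\top Q\,\ww$ via the pairwise overlap integrals, and read off $A\ww = \ss$ from the selectivity formula. Your treatment of the positivity step is in fact more careful than the paper's, which asserts that $\ww \succcurlyeq \bm{0}$ holds ``for arbitrary $G_z$ if and only if'' $f \ge 0$ without acknowledging, as you do, that for a \emph{fixed} configuration of overlapping supports the constraint $\ww \succcurlyeq \bm{0}$ is strictly stronger than pointwise nonnegativity of $f$.
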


\begin{proof}
This theorem can be shown by substituting the definition of \method's model (\cref{eq:mm}) into the probability density function $f(x)$ in \cref{eq:prob:density}.
Note that minimizing $(f(x) - g_0(x))^2$ is equivalent to minimizing $f(x) \, (f(x) - 2 \, g_0(x))$, which is also equivalent to minimizing $(f(x))^2$ since $g_0(x)$ is constant over $B_0$ and $\int f(x) \, dx = 1$.

The integration of $(f(x))^2$ over $B_0$ can be converted to a matrix multiplication, as shown below:
\begin{align*}
\int (f(x))^2  \; dx 
  &= \int \left[ \sum_{z=1}^m \frac{w_z \, I(x \in G_z)}{|G_z|}  \right]^2 
  \; dx \\
  &= \int \sum_{i=1}^m \sum_{j=1}^m \frac{w_i w_j}{|G_i| |G_j|} \,
      I(x \in G_i) \, I(x \in G_j) \; dx 
\end{align*}
which can be simplified to
\begin{align*}
  &\sum_{i=1}^m \sum_{j=1}^m \frac{w_i w_j}{|G_i| |G_j|} \, |G_i \cap G_j|  \\
  &= \begin{bmatrix}
       w_1 \\
       w_2 \\
       \vdots \\
       w_m
     \end{bmatrix}^\top
     \begin{bmatrix}
       \frac{|G_1 \cap G_1|}{|G_1| |G_1|}  & \cdots & \frac{|G_1 \cap G_m|}{|G_1| |G_m|} \\
       \vdots & & \vdots \\
       \frac{|G_m \cap G_1|}{|G_m| |G_1|}  & \cdots & \frac{|G_m \cap G_m|}{|G_m| |G_m|}
     \end{bmatrix}
     \begin{bmatrix}
       w_1 \\
       w_2 \\
       \vdots \\
       w_m
     \end{bmatrix} \\
  &= \ww^\top Q \, \ww
\end{align*}

Second, we express the equality constraints in an alternative form. Note that the left-hand side of each equality constraint, i.e., $\int_{B_i} f(x) \, dx$, can be expressed as:
\begin{align*}
&\int_{B_i} f(x) \; dx 
= \int_{B_i} \, \sum_{j=1}^m \, \frac{w_j}{|G_j|} \, I(x \in G_j) \; dx \\
&= \sum_{j=1}^m \, \frac{w_j}{|G_j|} \, \int_{B_i} I(x \in G_j) \; dx 
= \sum_{j=1}^m \, \frac{w_j}{|G_j|} \, |B_i \cap G_j| \\
&= \begin{bmatrix}
       \frac{|B_i \cap G_1|}{|G_1|} &
       \cdots &
       \frac{|B_i \cap G_m|}{|G_m|}
   \end{bmatrix}
   \begin{bmatrix}
       w_1 \\
       \vdots \\
       w_m
   \end{bmatrix}
\end{align*}
Then, the equality constraints, i.e., $\int_{B_i} f(x) \, dx = s_i$ for $i = 1, \ldots, n$, can be expressed as follows:
\begin{align*}
&\begin{bmatrix}
       \frac{|B_1 \cap G_1|}{|G_1|} &
       \cdots &
       \frac{|B_1 \cap G_m|}{|G_m|} \\
       \vdots & \ddots & \vdots \\
       \frac{|B_n \cap G_1|}{|G_1|} &
       \cdots &
       \frac{|B_n \cap G_m|}{|G_m|}
   \end{bmatrix}
   \begin{bmatrix}
       w_1 \\
       \vdots \\
       w_m
   \end{bmatrix}
   = 
   \begin{bmatrix}
       s_1 \\
       \vdots \\
       s_m
   \end{bmatrix} \\
&\Rightarrow \quad A \ww = \ss
\end{align*}

Finally, $\ww^\top \bm{1} = 1$ if and only if $\int f(x) = 1$, and $\ww \succcurlyeq \bm{0}$ for arbitrary $G_z$ if and only if $\int f(x) \ge 0$.
\end{proof}

The implication of the above theorem is significant: we could reduce the problem of $O(2^n)$ complexity to the problem of only $O(n^2)$ complexity.

\ph{Conversion Two: Moving Constraints}
The quadratic programming problem in \cref{thm:quadratic} can be solved efficiently by most off-the-shelf optimization libraries; however, we can solve the problem even faster by converting the problem to an alternative form. We first present the alternative problem, then discuss it.

\vspace{2mm}

\begin{problem}[\method's QP]
\label{prob:opt:alter}
\method solves this problem alternative to the quadratic programming problem in
\cref{thm:quadratic}:
\begin{align*}
\argmin_{\ww} \quad \ell(\ww) = \ww^\top Q\, \ww + \lambda\, \| A \ww - \ss \|^2
\end{align*}
where
$\lambda$ is a large real value (\method uses $\lambda = 10^6$).
\end{problem}

In formulating \cref{prob:opt:alter}, two types of conversions are performed: (1) the consistency with the observed queries 
(i.e., $A \ww = \ss$) is moved into the optimization objective as a penalty term, and (2) the positivity of $f(x)$ is not explicitly specified (by $\ww \succcurlyeq \bm{0}$). These two types of conversions have little impact on the solution for two reasons. First, to guarantee the consistency, a large penalty (i.e., $\lambda = 10^6$) is used. Second, the mixture model $f(x)$ is bound to approximate the true distribution, which is always non-negative. We empirically examine the advantage of solving \cref{prob:opt:alter} (instead of solving the problem in \cref{thm:quadratic} directly) in \cref{sec:exp:approx}.

The solution $\ww^*$ to \cref{prob:opt:alter} can be obtained in a straightforward way by setting its gradients of the objective (with respect to $\ww$) equal to $\bm{0}$:
\begin{align*}
& \frac{\partial \ell(\ww^*)}{\partial \ww} 
= 2 \, Q \, \ww^* + 2 \, \lambda \, A^\top (A \, \ww^* - \ss) = \bm{0} \\
&\Rightarrow \ww^* = \left( Q + \lambda\, A^\top A \right)^{\minus 1} \, \lambda \, A \, \ss
\end{align*}
Observe that $\ww^*$ is expressed in a closed form; thus, we can obtain $\ww^*$ analytically instead of using iterative procedures typically required for general quadratic programming.





\section{Experiment}
\label{sec:exp}

In this section, we empirically study \method. 
\ignore{First, we compare \method's end-to-end selectivity estimation quality against query-driven histograms (\cref{sec:exp:acc}). 
Second, we study the efficiency of \method's mixture model (\cref{sec:exp:model}).
Third, we compare \method's optimization efficiency to a standard solution (\cref{sec:exp:opt}). 
Fourth, we compare \method to the methods that rely on periodic database scans (\cref{sec:exp:others}).}
In summary, our results 
show the  following:

\begin{enumerate}
\item \textbf{End-to-end comparison against other query-driven methods:}
\method was 
significantly faster (34.0$\times$--179.4$\times$) for the same accuracy---and produced much more accurate estimates (26.8\%--91.8\% lower error) for the same time limit---than previous query-driven methods. (\cref{sec:exp:acc})

\item \textbf{Comparison against periodic database scans:} For the same storage size, \method's selectivity estimates were  77.7\% and 91.3\% more accurate than   scan-based histograms and sampling, respectively.  (\cref{sec:exp:others})

\item \textbf{Impact on PostgreSQL performance:}
Using \method for PostgreSQL makes the system
2.25$\times$ faster (median) than the default.
(\cref{sec:exp:impact})

\item \textbf{Effectiveness of \method's mixture model:}
\method's model produced considerably more accurate estimates than histograms given the same number of parameters. (\cref{sec:exp:model})

\item \textbf{Robustness to workload shifts:}
\method's accuracy quickly recovers after sudden workload shifts.
(\cref{sec:exp:workload})

\item \textbf{Optimization efficiency:} \method's optimization method (\cref{prob:opt:alter}) 
was 1.5$\times$--17.2$\times$ faster than solving the standard quadratic programming. (\cref{sec:exp:opt})


\end{enumerate}

\subsection{Experimental Setup}

\ph{Methods}
Our experiments compare \method to six other selectivity estimation methods.


\vspace{1mm}
\noindent \textbf{Query-driven Methods:}
\begin{enumerate}
\item STHoles~\cite{bruno2001stholes}: This method creates histogram buckets by partitioning existing buckets (as in \cref{fig:isomer_partition}). The frequency of an existing bucket is distributed uniformly among the newly created buckets.

\item ISOMER~\cite{srivastava2006isomer}: 
This method applies STHoles for histogram bucket creations, but it computes the optimal frequencies of the buckets by finding the maximum entropy distribution.
Among existing query-driven methods, ISOMER produced the highest accuracy in our experiments.

\item ISOMER+QP: This method combines ISOMER's approach for creating histogram buckets and \method's quadratic programming (\cref{prob:opt:alter}) for computing the optimal bucket frequencies. 

\item QueryModel~\cite{anagnostopoulos2015learning}: 
This method computes the selectivity estimate by a weighted average of the selectivities of observed queries. The weights are determined based on the similarity of the new query and each of the queries observed in the past.
\end{enumerate}

\vspace{1mm}
\noindent \textbf{Scan-based Methods:}
\begin{enumerate}
\setcounter{enumi}{4}
\item AutoHist: This method creates an equiwidth multidimensional histogram by scanning the data. It also updates its histogram whenever more than 20\% of the data changes (this is the default setting with
SQL Server's \texttt{AUTO\_UPDATE\_} \texttt{STATISTICS} option~\cite{sqlserver_auto}).

\item AutoSample: This method relies on a uniform random sample of data to estimate selectivities.
Similar to AutoHist, AutoSample updates its sample whenever more than 10\% of the data changes. \barzan{any citations? any real dbms uses this?} \yongjoo{Not found yet}
\end{enumerate}

\noindent We have implemented all methods in Java.

\input{fig_tab/fig_exp_performance}

\ph{Datasets and Query Sets}
We use two real datasets and one synthetic dataset in our
 experiments, as follows:
\begin{enumerate}
\item \dmv: This dataset contains the vehicle registration records of New York State~\cite{dmv_dataset}. It contains 11,944,194 rows.
Here, the queries ask for the number of  valid registrations for vehicles produced within a certain date range.
Answering these queries involves predicates on
 three attributes: \texttt{model\_year}, \texttt{registration\_date}, and \texttt{expiration\_date}.
 
\item \insta: This dataset contains the sales records of an online grocery store~\cite{instacart_dataset}. 
We use their \texttt{orders} table, which contains 3.4 million sales records.
Here, the queries ask for the reorder frequency for
orders made during different hours of the day. Answering these queries involves
predicates on
 two attributes: \texttt{order\_hour\_of\_day} and \texttt{days\_since\_prior}.
(In \cref{sec:exp:others}, we use more attributes (up to ten).)
 
\item \synthetic: We also generated a synthetic dataset using a bivariate dimensional normal distribution. 
We varied this dataset to study our method under workload shifts,
different degrees of correlation between the attributes, and more. 
Here, the queries count the number of points that lie within a randomly generated rectangle.
\end{enumerate}

\noindent For each dataset, we measured the estimation quality   using   100 test queries
not used for training.
The ranges for selection predicates (in queries) were generated randomly within a feasible region; the ranges of different queries may or may not overlap.

\ph{Environment}
All our experiments were performed on   \texttt{m5.4xlarge}   EC2  instances, with 16-core Intel Xeon 2.5GHz and 64 GB of memory running Ubuntu 16.04.

\ph{Metrics}
We use
the root mean square (RMS) error:
\[
\text{RMS error} = 
    \left(
        \frac{1}{t} \sum_{i=1}^t
        (\text{true\_sel} - \text{est\_sel})^2
    \right)^{1/2}
\]
where $t$ is the number of test queries. We report the RMS errors in percentage (by treating both true\_sel and est\_sel as percentages).


When reporting training time, we include the time required for refining a model using an additional observed query,
which itself includes the time to store the query and run the necessary optimization routines.

\input{fig_tab/fig_exp_other}

\subsection{Selectivity Estimation Quality}
\label{sec:exp:acc}

In this section, we compare the end-to-end selectivity estimation quality of 
\method versus query-driven histograms. 
%
\ignore{
We have summarized the main results in \cref{tab:exp:acc}.
The table reports that, for both \dmv and \insta dataset, \method was significantly faster for the same accuracy, and significantly more accurate for the same time limit.
}
Specifically, we gradually increased the number of observed queries provided to each method from 10 to 1,000. 
For each number of observed queries, 
we measured the estimation error and training time of each method using 100 test queries.

 These results are reported 
in \cref{fig:exp:acc}. 
Given the same number of observed queries,  
\method's training was significantly faster
 (\cref{fig:exp:perf:a,fig:exp:perf:d}),
 while still achieving comparable 
estimation errors 
(\cref{fig:exp:perf:b,fig:exp:perf:e}).
We also studied the relationship between errors and training times in \cref{fig:exp:perf:c,fig:exp:perf:f}, confirming 
\method's superior efficiency (STHoles, ISOMER+QP, and QueryModel are
omitted 
in these figures due to their poor performance). 
In summary, \method was able to quickly learn
from a large
 number of observed queries (i.e., shorter training time)
    and produce highly accurate models.

\subsection{Comparison to Scan-based Methods}
\label{sec:exp:others}


We also compared \method to two automatically-updating scan-based methods, AutoHist and AutoSample,
which incorporate SQL Server's automatic updating rule into equiwidth multidimensional histograms and samples, respectively. Since both methods incur an up-front cost for obtaining
their statistics, they should produce relatively more accurate estimates initially (before seeing new queries).
In contrast, the accuracy of \method's estimates 
    should quickly improve as new queries are observed.

To verify this empirically, 
    we first generated a \synthetic dataset (1 million tuples) with correlation 0. We then inserted 200K new tuples generated from a distribution with a \emph{different} correlation 
    after processing   200 queries, and repeated this process. 
    In other words, after processing the first 100 queries, we inserted new data with correlation 0.1; after processing the next 100 queries, we inserted new data with correlation 0.2; and continued this process until a total of 1000 queries were processed. 
    We performed this process for each method under comparison. 
    \method adjusted
     its model each time after observing 100 queries.
    AutoHist and AutoSample updated their statistics after each batch of
    data insertion. 
    \method and AutoHist both used 100 parameters (\# of subpopulations for the mixture model and \# of buckets for histograms); 
    AutoSample used a sample of 100 tuples.

\input{fig_tab/fig_exp_performance_impact2}

\input{fig_tab/fig_exp_model}

\cref{fig:exp:nonsl} shows the error of each method. 
As expected, AutoHist produced more accurate estimates initially.
However, as more queries were processed, 
    the error of \method drastically decreased. 
 In contrast, the errors of AutoSample and AutoHist 
    did not improve with more queries, as they only depend on the frequency 
    at which a new scan (or sampling) is performed. 
  After processing only 100 queries (i.e., initial update), 
\method produced more accurate estimates than both AutoHist and AutoSample.  On average (including the first 100 queries), 
\method was 71.4\% and 89.8\% more accurate than AutoHist and AutoSample, respectively. This is consistent with the previously reported observations that
    query-driven methods yield better accuracy than
        scan-based ones~\cite{bruno2001stholes}.
(The reason why query-driven proposals have not  been widely adopted
to date is due to their prohibitive cost; see \cref{sec:related:queries}).

In addition, \cref{fig:exp:nonsl:time} compares the update times of the three methods.
By avoiding scans, \method's query-driven updates were 525$\times$ and 243$\times$ faster than AutoHist and AutoSample, respectively.

Finally, we studied how the performance of those methods changed as we increased the data dimension (i.e., the number of attributes appearing in selection predicates). First, using the \insta dataset,
we designed each query to target a random subset of dimensions (N/2) as increasing the dimension N from 2 to 10. 
In all test cases (\cref{fig:exp:scan:d} left), \method’s accuracy was consistent, showing its ability to scale to high-dimensional data. Also in this experiment, \method performed significantly better than, or comparably to, histograms and sampling. 
We could also obtain a similar result using the \synthetic dataset (\cref{fig:exp:scan:d} right).
This consistent performance across different data dimensions is primarily due to how \method is designed; that is, its estimation only depends on how much queries overlap with one another.





\subsection{Impact on Query Performance}
\label{sec:exp:impact}

This section examines \method's impact on query performance. That is, we test if \method's more accurate
selectivity estimates can lead to improved query performance for actual database systems (i.e., shorter latency).

To measure the actual query latencies, we used PostgreSQL ver. 10 with a third-party extension, called \texttt{pg\_hint\_plan}~\cite{pghint}.
Using this extension, we enforced our own estimates (for PostgreSQL's query optimization) in place of the default ones.
We compared PostgreSQL Default (i.e., no hint) and \method---to measure the latencies of the following join query in processing the \insta dataset:

\begin{lstlisting}[
    basicstyle=\small\ttfamily,
    xleftmargin=10pt
]
select count(*)
from S inner join T on S.tid = T.tid
       inner join U on T.uid = U.uid
where (range_filter_on_T)
  and (range_filter_on_S);
\end{lstlisting}

\noindent
where the joins keys for the tables \texttt{S}, \texttt{T}, and \texttt{U} were in the PK-FK relationship, as described by the schema (of \insta).

\cref{fig:exp:perf_impact} shows the speedups \method could achieve in comparison to PostgreSQL Default. Note that \method does not improve any underlying I/O or computation speed; its speedups are purely from helping PostgreSQL's query optimizer choose a more optimal plan based on improved selectivity estimates. Even so, \method could bring 2.25$\times$ median speedup, with 3.47$\times$ max speedup. In the worst case, PostgreSQL with \method was almost identical to PostgreSQL Default (i.e., 0.98$\times$ speedup).

\subsection{\method's Model Effectiveness}
\label{sec:exp:model}

In this section, we  compare the effectiveness of \method's model 
to that of models used in the previous work.
Specifically, the effectiveness is assessed by
(1) how the model size---its number of parameters---grows as the number of observed queries grows, 
and (2)
how quickly its error decreases as its number of  parameters grows.  

\ignore{Here, we use \method's default setting whereby   its number of parameters increases linearly with the number of observed queries, deferring the analysis of its non-default setting to \cref{sec:exp:param}.}

\begin{figure}[t]

\pgfplotsset{workloadfig/.style={
    width=65mm,
    height=32mm,
    xmin=0,
    xmax=300,
    ymin=0,
    ymax=0.04,
    xlabel=Query Sequence Number,
    ylabel=RMS Error,
    xlabel near ticks,
    ylabel near ticks,
    ylabel style={align=center},
    xtick={0, 50, ..., 300},
    ytick={0, 0.01, 0.02, 0.03, 0.04},
    yticklabels={0\%, 1\%, 2\%, 3\%, 4\%},
    scaled y ticks = false,
    ylabel shift=-2pt,
    xlabel shift=-2pt,
    legend style={
        at={(1.05,1.0)},anchor=north west,column sep=2pt,
        draw=black,fill=none,line width=.5pt,
        /tikz/every even column/.append style={column sep=5pt},
        font=\footnotesize,
    },
    legend cell align={left},
    legend columns=1,
    every axis/.append style={font=\footnotesize},
    ymajorgrids,
    minor grid style=lightgray,
}}

\centering

\begin{tikzpicture}
\begin{axis}[workloadfig,
    ]

\addplot[mark=*,vintageblack,mark size=1.5,thick]
table[x=x,y=y] {
x y
0	0.014043476
10	0.022072786
20	0.015459343
30	0.020557596
40	0.016907739
50	0.017329738
60	0.010130827
70	0.021577869
80	0.024287008
90	0.016009445
100	0.00322752
110	0.005056955
120	0.003456435
130	0.004727297
140	0.003890263
150	0.0040467
160	0.002195448
170	0.004959721
180	0.005570967
190	0.003646233
200	0.010634815
210	0.009980641
220	0.010558702
230	0.010103518
240	0.010481613
250	0.010366147
260	0.010820137
270	0.009972486
280	0.009710103
290	0.010530145
};

\addplot[mark=*,vintagegreen,mark size=1.5,thick]
table[x=x,y=y] {
x y
0	0.023958306
10	0.016758871
20	0.008928302
30	0.015565997
40	0.017910733
50	0.011619284
60	0.008225758
70	0.005794346
80	0.012295496
90	0.013667535
100	0.016482134
110	0.010148335
120	0.014675194
130	0.012298712
140	0.019657059
150	0.033598044
160	0.016878892
170	0.019591617
180	0.006728816
190	0.004050885
200	0.016375688
210	0.02084985
220	0.008518387
230	0.01162606
240	0.007547007
250	0.012578665
260	0.005321596
270	0.009461271
280	0.020385251
290	0.006630684
};

\addplot[mark=*,vintageorange,mark size=1.5,thick]
table[x=x,y=y] {
x y
0	0.029573072
10	0.001420434
20	0.000079971
30	0.000081535
40	0.000079646
50	0.000045865
60	0.00006189
70	0.000064898
80	0.000069354
90	0.000068426
100	0.011631817
110	0.0002922
120	0.000284525
130	0.000126782
140	0.000105779
150	0.000045742
160	0.000056393
170	0.000054067
180	0.000051621
190	0.000070682
200	0.016063209
210	0.000191805
220	0.000168858
230	0.00011363
240	0.000102185
250	0.000050302
260	0.000038317
270	0.000044829
280	0.000026292
290	0.000059506
};

\addlegendentry{Histograms}
\addlegendentry{Sampling}
\addlegendentry{\method}

\end{axis}
\end{tikzpicture}

\vspace{-4mm}

\caption{Robustness to sudden workload shifts, which occurred 
at the sequence \#100 and at \#200.
\method's error increased temporarily right after each workload jump,
but it reduced soon.}
\label{fig:exp:workload}

\end{figure}
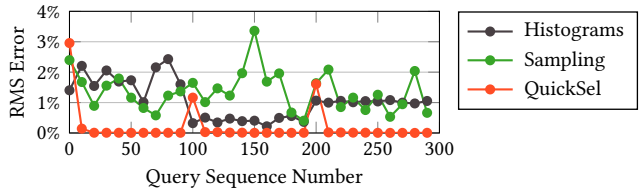

\begin{figure}[t]

\pgfplotsset{optfig/.style={
    width=62mm,
    height=32mm,
    xmin=0,
    xmax=1000,
    ymin=0,
    ymax=0.1,
    xlabel=Number of Observed Queries,
    ylabel=Runtime (ms),
    xlabel near ticks,
    ylabel near ticks,
    ylabel style={align=center},
    xtick={0, 200, ..., 1000},
    ytick={0, 0.02, 0.04, 0.06, 0.08, 0.1},
    yticklabels={0, 20, 40, 60, 80, 100},
    scaled y ticks = false,
    ylabel shift=-2pt,
    xlabel shift=-2pt,
    legend style={
        at={(1.05,1.0)},anchor=north west,column sep=2pt,
        draw=black,fill=none,line width=.5pt,
        /tikz/every even column/.append style={column sep=5pt},
        font=\footnotesize,
    },
    legend cell align={left},
    legend columns=1,
    every axis/.append style={font=\footnotesize},
    ymajorgrids,
    minor grid style=lightgray,
}}

\centering

\begin{tikzpicture}
\begin{axis}[optfig]

\addplot[mark=*,mark size=1.5,mark options={fill=vintageblack,draw=vintageblack},
thick,draw=vintageblack,
]
table[x=x,y=y] {
x y
20   0.0017
50   0.0017
100  0.0024
200  0.0050
300  0.0088
400  0.0154
500  0.0235
600  0.0326
700  0.0400
800  0.0531
900  0.0687
1000 0.0833
};

\addplot[mark=*,mark size=1.5,mark options={fill=vintageorange,draw=vintageorange},
thick,draw=vintageorange,
]
table[x=x,y=y] {
x y
20   0.0001
50   0.0001
100  0.0002
200  0.0005
300  0.0011
400  0.0018
500  0.0028
600  0.0038
700  0.0050
800  0.0065
900  0.0081
1000 0.0100
};

\addlegendentry{Standard QP}
\addlegendentry{\method's QP}

\end{axis}
\end{tikzpicture}

\vspace{-4mm}

\caption{\method's optimization effect.}
\label{fig:exp:opt}

\end{figure}
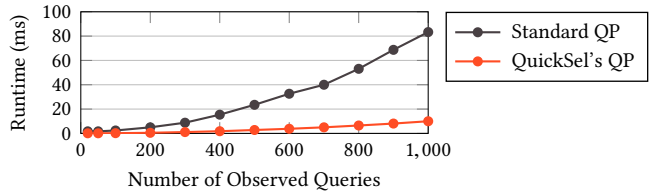

\cref{fig:exp:model:c} reports the relationship between the number of observed queries and the number of model parameters.   
As discussed in \cref{sec:prelim:drawback}, the number of buckets (hence, parameters) of ISOMER increased quickly as the number of observed queries grew. 
STHoles was able to keep the number of its parameters small due to its bucket merging technique; however, this had a negative impact on its accuracy.
 Here, \method used the least number of model parameters. For instance, when 100 queries were observed for \dmv, \method had 10$\times$ fewer parameters than STHoles and 56$\times$ fewer parameters than ISOMER.

 We also studied the relationship between  the 
 number of model parameters and the error.
 The lower the error (for the same number of model parameters), the more effective the model.
 \cref{fig:exp:model:d} shows the result. 
Given the same number of model parameters, \method produced significantly more accurate estimates. Equivalently, \method produced the same quality estimates with much fewer model parameters. 

\subsection{Robustness to Workload Shifts}
\label{sec:exp:workload}

In this section, we test QuickSel’s performance under significant workload shifts.
That is, after observing a certain number of queries (i.e., 100 queries) around a certain region of data, the query workload suddenly jumps to a novel region.  This pattern repeats several times.

\cref{fig:exp:workload} shows the result. Here, we could observe the following pattern. QuickSel’s error increased significantly right after each jump (i.e., at query sequence \#100 and at \#200), producing 1.5$\times$-3.6$\times$ higher RMS errors compared to histograms. However, QuickSel’s error dropped quickly, achieving 12$\times$-378$\times$ lower RMS errors than histograms. This was possible due to QuickSel’s faster adaptation.

\begin{figure}[t]

\pgfplotsset{workloadfig/.style={
    width=45mm,
    height=30mm,
    xmin=0.5,
    xmax=5.5,
    ymin=0,
    ymax=0.030,
    xlabel=Data Dimension,
    ylabel=RMS Error,
    xlabel near ticks,
    ylabel near ticks,
    ylabel style={align=center},
    xtick={1, 2, 3, 4, 5},
    xticklabels={2, 4, 6, 8, 10},
    ytick={0, 0.005, 0.01, 0.015, 0.020, 0.025, 0.030},
    yticklabels={0.0\%, 0.5\%, 1.0\%, 1.5\%, 2.0\%, 2.5\%, 3.0\%},
    scaled y ticks = false,
    ylabel shift=-2pt,
    xlabel shift=-2pt,
    legend style={
        at={(0.0,1.1)},anchor=south west,column sep=2pt,
        draw=black,fill=none,line width=.5pt,
        /tikz/every even column/.append style={column sep=5pt},
        font=\footnotesize,
    },
    legend cell align={left},
    legend columns=3,
    every axis/.append style={font=\footnotesize},
    ymajorgrids,
    minor grid style=lightgray,
    legend image code/.code={%
    \draw[#1, draw=none] (0cm,-0.1cm) rectangle (0.6cm,0.1cm);}
}}

\centering

\begin{subfigure}[b]{0.48\linewidth}
\begin{tikzpicture}
\begin{axis}[workloadfig,
    ybar,
    bar width=1mm,
    legend image code/.code={%
        \draw[#1, draw=none] (0cm,-0.1cm) rectangle (0.6cm,0.1cm);}
    ]

\addplot[fill=vintageblack,draw=none
]
table[x=x,y=y] {
x y
1 0.000027
2 0.000192
3 0.001234
4 0.006034
5 0.024774
};

\addplot[fill=vintageorange,draw=none]
table[x=x,y=y] {
x y
1 0.000075
2 0.000126
3 0.000768
4 0.001316
5 0.003954
};

\addlegendentry{Sampling-based}
\addlegendentry{Clustering-based}

\end{axis}
\end{tikzpicture}

\vspace{-2mm}
\caption{Accuracy}
\label{fig:exp:subpop:a}
\end{subfigure}
\hfill
\begin{subfigure}[b]{0.48\linewidth}
\begin{tikzpicture}
\begin{axis}[workloadfig,
    ybar,
    bar width=1mm,
    ylabel=Overhead (ms),
    ymin=0,
    ymax=0.020,
    ytick={0, 0.005, 0.010, 0.015, 0.020},
    yticklabels={0, 5, 10, 15, 20},
    legend image code/.code={%
        \draw[#1, draw=none] (0cm,-0.1cm) rectangle (0.6cm,0.1cm);}
    ]

\addplot[fill=vintageblack,draw=none
]
table[x=x,y=y] {
x y
1 0.008006516
2 0.008513297
3 0.008673624
4 0.009354242
5 0.010277341
};

\addplot[fill=vintageorange,draw=none
]
table[x=x,y=y] {
x y
1 0.007477261
2 0.008332779
3 0.011489065
4 0.012661078
5 0.014273459
};


\end{axis}
\end{tikzpicture}

\vspace{-2mm}
\caption{Per-query Overhead}
\label{fig:exp:subpop:b}
\end{subfigure}

\vspace{-2mm}
\caption{Subpopulation generation approaches.
Clustering-based was more accurate, but slower.}
\label{fig:exp:sub}
\end{figure}

\subsection{\method Internal Mechanisms}
\label{sec:exp:approx}
\label{sec:exp:opt}

In this section, we empirically study (1) the effect of \method's optimization (presented in \cref{sec:method:solving}), 
and (2) two alternative mechanisms for generating subpopulations
(presented in \cref{sec:model:sub}).

\ph{Optimization Efficiency}
To study \method's optimization efficiency,
    we compared two approaches for solving the quadratic problem 
    defined in \cref{thm:quadratic}:
solving the original QP without 
    any modifications versus solving our modified 
version (\cref{prob:opt:alter}).
We used the  \texttt{cvxopt}
library for the former and used  \texttt{jblas} (a linear algebra library) for the latter. 
Both libraries use multiple cores for parallel processing. 

\cref{fig:exp:opt} shows
the time taken by   each optimization
approach. The second approach (\cref{prob:opt:alter})
was increasingly more 
efficient as the number of observed queries grew.
For example, it was 8.36$\times$ faster when the number of observed queries reached 1,000.
This is thanks to the modified problem 
having an analytical solution, while the original
problem required an iterative gradient descent 
solution.

\ph{Subpopulation Generation}
We empirically studied the two subpopulation generation approaches (i.e., the sampling-based approach and the clustering-based approach, \cref{sec:model:sub}) in terms of their scalability to high-dimensional data.
Specifically, we compared their estimation accuracies and computational overhead using the \synthetic dataset (with its dimension set to 2--10).

\cref{fig:exp:sub} reports the results. As shown in \cref{fig:exp:sub}(a), the clustering-based approach obtained impressive accuracy in comparison to the sampling-based one. However, as shown in \cref{fig:exp:sub}(b), the clustering-based approach produced higher overhead (i.e., longer training times),
which is an example of the natural tradeoff between cost and accuracy.

\ignore{
\subsection{QuickSel's Robustness}
\label{sec:exp:param}

\tofix{Remove this section}
 
We further studied how \method's accuracy is affected by data distribution, query workloads, number of  model parameters, and number of columns in the schema.

\ph{Data Correlation}
To study how \method's accuracy changes based on 
different degrees of correlations in the data,
    we used  our \synthetic workload,
    generating values with different correlations between columns.
    In each case, \method trained its model using 100 observed queries; the error was measured using the 
    other 100 queries (not used for training).
 As shown in \cref{fig:exp:param:a}, 
    the errors remained almost identical across
        all different degrees of correlation.
    

\ph{Workload Shifts}
We also studied  \method's accuracy 
under situations where the 
query workload shifts over time. 
Here, we generated the  \synthetic dataset
with
correlation 0.5.
We generated  queries with different predicates,
    with each predicate being a different rectangle in the 2-dimensional 
    space.\footnote{The purpose of this scenario
     is to 
        simulate a workload shift;
        we study high-dimensional data and 
        complex queries in \cref{fig:exp:param:d}.}
We simulated three scenarios of workload shifts
    by modifying these rectangular predicates. 
First, we created a \emph{random-shift} workload by choosing random rectangles in the space. We also created a \emph{sliding-shift} workload by gradually moving the rectangles from the left-tail of the normal
        distribution towards the right-tail.
Third, we created a \emph{no-shift} workload by using the same rectangle for all queries.

In each scenario, we first trained \method on the first 10 observed queries (i.e., sequence numbers: 1--10) and   measured the accuracy on the next 10 observed queries (i.e., sequence numbers: 11--20). 
Then, we trained \method on the first 20 observed queries (i.e., sequence numbers: 1--20) and measured the accuracy using the next 10 observed queries (i.e., their sequence numbers: 21--30). We kept increasing the number of observed queries until we reached 
    1,000 observed queries for training.
    
The results are plotted in \cref{fig:exp:param:b}. 
As expected, the errors were highest under random-shift. However, the error still decreased as \method observed more queries; after 100 queries, the relative error was only 1.2\% for the random shift-workload. The errors were lower in general for the other workloads.


\ph{Model Parameter Count}
To study the relationship between \method's 
number of parameters and  its accuracy, 
  we disabled \method's default setting
  (i.e., \# of model params = 4 $\times$ \# of observed queries).
Instead, we manually controlled its number of parameters.
 
\cref{fig:exp:param:c} shows the results for a \synthetic dataset with correlation 0.5. 
As expected, the errors were relatively higher when the number of model parameters was extremely small (i.e., 10). 
However, as soon as the number of model parameters reached 50, the errors
    were drastically reduced.

\begin{table*}[t]
\caption{Comparison of selectivity estimation methods}
\label{tab:related}

\vspace{-2mm}

\centering
\footnotesize
\renewcommand{\arraystretch}{1.2}
\begin{tabular}{p{18mm} p{24mm} p{30mm} p{90mm}}
\toprule
\textbf{Approach} & \textbf{Model}  & \textbf{Method} & \textbf{Key Contributions} \\
\midrule
\multirow{5}{*}{\parbox{16mm}{\textbf{Based on} \\ \textbf{Database} \\ \textbf{Scans} \\ (Scan-based \\ Selectivity \\ Estimation) }}
  & 
  Histograms
    & Multi-dim Hist~\cite{lynch1988selectivity,deshpande2001independence,ilyas2004cords} & Introduces multidimensional histograms \\
  & & Muralikrishna~\cite{muralikrishna1988equi} & Introduces equidepth histograms \\
  & & Van Gelder~\cite{van1993multiple} & Estimates Join selectivity with histograms for \emph{important} domains \\
  & & GOH~\cite{jagadish2001global} & Optimizes single-attribute histograms for joint distribution \\
  & & Thaper~\cite{thaper2002dynamic} & Builds histograms over streaming data \\
  & & To~\cite{to2013entropy} & Builds histograms with entropy as a metric \\
\cmidrule{2-4}
  & Sampling 
    & Lipton~\cite{lipton1990practical} & Introduces adaptive sampling for high accuracy \\
  & & Haas~\cite{haas1994relative} & Uses sampling for join selectivity estimation \\
  & & Riondato~\cite{riondato2011vc}  
      & Guarantees accuracy relying on the VC-dimension of queries \\
\cmidrule{2-4}
  & ML
    & KDE~\cite{gunopulos2005selectivity,gunopulos2000approximating,heimel2015self}
    & Applies kernel density estimation to selectivity estimation \\
  & & PGM~\cite{getoor-selectivity,graphical-sel1,graph-synopsis-selectivity}
    & Uses probabilistic graphical models for selectivity estimation \\
  & & Neural Net~\cite{cardinality-neural-net,kipf2018learned}
    & Trains a neural network for selectivity estimation \\
\midrule
\multirow{6}{*}{\parbox{16mm}{\textbf{Based on} \\ \textbf{Observed} \\ \textbf{Queries} \\ (Query-driven \\ Selectivity \\ Estimation) }}
  & \multirow{4}{*}{\parbox{20mm}{Error-feedback \\ Histograms \\ \emph{(fast but} \\ \emph{less accurate)}}}
    & ST-histogram~\cite{aboulnaga1999self} & 
      Refines the bucket frequencies based on the errors \\
  & & LEO~\cite{stillger2001leo}
      & Identifies incorrect statistics using observed queries \\
  & & STHoles~\cite{bruno2001stholes} 
      & Proposes a new buckets split mechanism; adopted by ISOMER \\
  & & SASH~\cite{lim2003sash} & Proposes a \emph{junction tree} model for finding the best set of histograms \\
  & & QueryModel~\cite{anagnostopoulos2015learning} & Avoids modeling the data distribution by using queries directly \\
\cmidrule{2-4}
  & \multirow{3}{*}{\parbox{20mm}{Max-Entropy \\ Histograms \\ \emph{(accurate but} \\ \emph{slow)} }} & ISOMER~\cite{srivastava2006isomer,markl2005consistently,markl2007consistent} 
    & Finds a maximum entropy distribution consistent with observed queries \\
& & Kaushik et al.~\cite{kaushik2009consistent} & Extends ISOMER for distinct values \\
& & R\'e et al.~\cite{re2012understanding,re2010understanding} & Seeks the max entropy distribution based on \emph{possible worlds} \\

\cmidrule{2-4}
  & \textbf{Mixture Model} \newline \em \textbf{(fast \& accurate)}
  & \textbf{\method (Ours)} & Employs a mixture model for selectivity estimation; develops an efficient training algorithm for the new model \\
\bottomrule
\end{tabular}
\end{table*}

\ph{Data Dimension}
Lastly, we studied the effect of data dimension
(i.e., number of columns) on error. 
Here, we generated the datasets using   multivariate normal distributions with different dimensions. 
For each dataset, 
we used three methods---AutoHist, AutoSample, and \method---to produce selectivity estimates for the predicates on all dimensions.
AutoHist used 1000 buckets, AutoSample used 1000 sampled rows, and \method used 1000 observed queries.

\cref{fig:exp:param:d} shows the result. The error of AutoHist increased quickly as the dimensions increased, which is a well-known problem of multidimensional histograms.
 In contrast, the errors of AutoSample and \method were not as sensitive to the number of dimensions of the data.
 The robustness of \method is due to the fact
that its internal model only depends on the intersection sizes of the query predicates (not on their dimensions).
Among these methods,  \method was the most accurate.
}



\ignore{
\subsection{\method's Optimization Efficiency}
\label{sec:exp:approx}
\label{sec:exp:opt}

To study the \method's optimization efficiency,
    we compared two approaches for solving the quadratic problem 
    defined in \cref{thm:quadratic}:
solving the original QP without 
    any modifications versus solving our modified 
version (\cref{prob:opt:alter}).
We used the  \texttt{cvxopt}
library for the former and used  \texttt{jblas} (a linear algebra library) for the latter. 
Both libraries use multiple cores for parallel processing. 

\cref{fig:exp:opt} shows
the time taken by   each optimization
approach. The second approach (\cref{prob:opt:alter})
was increasingly more 
efficient as the number of observed queries grew.
For example, it was 8.36$\times$ faster when the number of observed queries reached 1,000.
This is thanks to the modified problem 
having an analytical solution, while the original
problem required an iterative gradient descent 
solution.
}



\section{Connection: MSE and Entropy}
\label{sec:rel_to_max_entropy}

The max-entropy query-driven histograms optimize their parameters (i.e., bucket frequencies) by searching for the parameter values that maximize the entropy of the distribution $f(x)$. We show that this approach is approximately equivalent to \method's optimization objective, i.e., minimizing the mean squared error (MSE) of $f(x)$ from a uniform distribution. The entropy of the probability density function is defined as $\minus \int f(x) \, \log( f(x) ) \, dx$. Thus, maximizing the entropy is equivalent to minimizing $\int f(x) \, \log( f(x) ) \, dx$, which is related to minimizing MSE as follows:
\begin{align*}
  \argmin \int f(x)\,  \log( f(x) ) \; dx
  &\approx \argmin \int f(x)\, (f(x) - 1) \; dx \\
  &= \argmin \int (f(x))^2 \; dx
\end{align*}
since $\int f(x) \, dx = 1$ by definition. We used the first-order Taylor expansion to approximate $\log(x)$ with $x - 1$. Note that, when the constraint $\int f(x)\, dx = 1$ is considered, $f(x) = 1/|R_0|$ is the common solution to both the entropy maximization and minimizing MSE.

\ignore{
\subsection{Integration with Existing DBMS}
\label{sec:prelim:db}

\tofix{Frame this as a challenge}

In this paper, we study the query-driven selectivity estimation (\cref{prob:sec}) as a standalone problem, which is not tied to any specific DBMS.
However, any query-driven selectivity estimation technique  
(including ours)
    can be integrated into a DBMS using much of their 
    existing infrastructure.
Most DBMS systems contain the module that computes actual selectivities, the module that computes selectivity estimates, and the API to store metadata in its system catalog.
For example, Spark already collects actual selectivities
in the \texttt{FilterExec} class~\cite{spark_link1}.
Although Spark currently reports this selectivity  only at query time, it can be modified to also store the observed selectivities in its metastore (which is equivalent to a DB's system catalog).
The produced selectivity estimates can then be used in
the \texttt{FilterEstimation} class.
Prior work uses a similar integration strategy but for IBM DB2~\cite{stillger2001leo} and Microsoft SQL Server~\cite{agrawal2006autoadmin}.
}

\section{Related Work}
\label{sec:related}

There is   extensive  work on 
selectivity estimation due to its importance for query optimization.
In this section, we review both scan-based  (\cref{sec:related:scan}) and query-driven methods
(\cref{sec:related:queries}).
\method belongs to the latter category.
We have summarized the related work in \cref{tab:related}.


\subsection{Database Scan-based Estimation}
\label{sec:related:scan}
 
As explained in \cref{sec:intro},
we use the term \emph{scan-based methods} to refer to 
    techniques that directly inspect the data (or part of it) for collecting their statistics. 
These approaches differ from query-based methods which 
rely only on the actual selectivities of the observed queries.

\ph{Scan-based Histograms}
These approaches approximate the joint distribution by     
  periodically scanning the data. There has been
  much work on how to efficiently express the joint distribution of multidimensional data~\cite{cormode2011synopses,lynch1988selectivity,muralikrishna1988equi,van1993multiple,jagadish2001global,deshpande2001independence,thaper2002dynamic,ilyas2004cords,to2013entropy,lipton1990practical,haas1994relative,riondato2011vc,gunopulos2005selectivity,gunopulos2000approximating,heimel2015self,jestes2011building,lam2005dynamic,moerkotte2014exploiting,guha2002fast,istvan2014histograms,he2005k,karras2008lattice,chen2004multi}.
 There is also some work on  histograms 
 for special types of data, such as XML~\cite{wang2006decomposition,bhowmick2007efficient,aboulnaga2001estimating,wu2003using}, spatial data~\cite{zhang2004clustering,tao2003selectivity,mamoulis2001selectivity,wang2014selectivity,sun2006spatio,lin2003multiscale,zhang2002linear,koudas2000optimal,jagadish1998optimal,koloniari2005query,korn1999range,tang2014scalable,neumann2008smooth}, graph~\cite{feng2005dmt}, string~\cite{mazeika2007estimating,jagadish1999multi,jagadish2000one,jagadish1999substring}; or for privacy~\cite{hay2010boosting,kuo2018differentially,li2010optimizing}.

\ph{Sampling}
Sampling-based methods rely on a sample of data for estimating its joint distribution~\cite{lipton1990practical,haas1994relative,riondato2011vc}. However, drawing a new random sample 
    requires a table-scan or random retrieval of tuples,
        both of which are costly operations and hence,
        are only performed periodically.

\ph{Machine Learning Models}
\ignore{Here, we first describe \emph{Kernel density estimation} (KDE); then, describe its difference from mixture models (MM), which we employ for \method.}
 KDE is a technique that translates randomly sampled data points into a distribution~\cite{silverman2018density}.
 In the context of selectivity estimation, KDE has been
used as an alternative  to histograms~\cite{gunopulos2005selectivity,gunopulos2000approximating,heimel2015self}.
The similarity between KDE and mixture models (which we employ for \method) is that they both express a probability density function as a summation of some basis functions.
However, KDE and MM (mixture models) are fundamentally different. KDE relies on independent and identically distributed samples, and hence lends itself to  
scan-based selectivity estimation. 
In contrast, 
    MM does not require any
    sampling and can thus be used in
        query-driven selectivity estimation (where sampling is not practical).
Similarly,
probabilistic graphical models~\cite{getoor-selectivity,graphical-sel1,graph-synopsis-selectivity},
neural networks~\cite{cardinality-neural-net,kipf2018learned},
and tree-based ensembles~\cite{dutt2019selectivity}
have been used for selectivity estimation.
Unlike histograms, these approaches can  capture
column correlations more succinctly.
However, applicability of these models for query-driven selectivity estimation has not been explored and remains unclear.

More recently,
sketching~\cite{cai2019pessimistic} and
probe executions~\cite{trummer2019exact} have been proposed, 
which differ from ours in that they build their models directly using the data
 (not query results).
Similar to histograms, using the data requires
    either periodic updates or higher query processing overhead. \method avoids both of these shortcomings
with its query-driven MM.


\subsection{Query-driven Estimation}
 \label{sec:related:queries}

Query-driven techniques create their histogram buckets adaptively according to the queries they observe in the workload. These can be further 
categorized into two techniques based on  how they compute their bucket frequencies: error-feedback histograms and max-entropy histograms.

\ph{Error-feedback Histograms} Error-feedback histograms~\cite{aboulnaga1999self,bruno2001stholes,lim2003sash,anagnostopoulos2015learning,khachatryan2015improving,khachatryan2012sensitivity}  adjust bucket frequencies in consideration of the errors made by old bucket frequencies. 
 They differ in how they create histogram buckets according to the observed queries.
For example, STHoles~\cite{bruno2001stholes} splits existing buckets with the predicate range of the new query.
 SASH~\cite{lim2003sash} uses 
a space-efficient multidimensional histogram, called MHIST~\cite{deshpande2001independence}, but determines its bucket frequencies with an error-feedback mechanism.
 QueryModel~\cite{anagnostopoulos2015learning} 
 treats the observed queries themselves as conceptual histogram buckets and determines the distances among those buckets based on the similarities among the queries' predicates.


\ph{Max-Entropy Histograms}
Max-entropy histograms~\cite{srivastava2006isomer,markl2005consistently,markl2007consistent,kaushik2009consistent,re2012understanding} find a maximum entropy distribution  consistent with the observed queries. 
Unfortunately, these methods generally suffer from the exponential growth in their number of buckets as 
the number of observed queries grows (as discussed in \cref{sec:prelim}).
\method avoids this problem by relying on mixture models.

\ph{Fitting Parametric Functions}
Adaptive selectivity estimation~\cite{chen1994adaptive} fits a parametric function (e.g., linear, polynomial) 
to the observed queries. 
This approach is more applicable when we know the data distribution  \emph{a priori}, which is not assumed by \method.


\ph{Self-tuning Databases}
Query-driven histograms have also been studied in the context of self-tuning databases~\cite{DBLP:journals/pvldb/MarcusNMZAKPT19,DBLP:conf/cidr/KraskaABCKLMMN19,ma2018query}.
IBM's LEO~\cite{stillger2001leo} corrects  errors in any stage of query execution based on the observed queries.
Microsoft's AutoAdmin~\cite{agrawal2006autoadmin,chaudhuri2007self} focuses on automatic physical design, self-tuning histograms, and monitoring infrastructure. Part of this effort is ST-histogram~\cite{aboulnaga1999self} and STHoles~\cite{bruno2001stholes} (see \cref{tab:related}).
DBL~\cite{park2017database} and IDEA~\cite{galakatos2017revisiting} exploit the answers to past queries for more accurate approximate query processing.
 QueryBot 5000~\cite{ma2018query} forecasts the future queries,
whereas OtterTune~\cite{van2017automatic} 
and index~\cite{kraska2018case}
use machine learning for automatic physical design and 
building secondary indices, respectively.

\ignore{
\subsection{Estimation in Commercial DBMS}
\label{sec:related:dbms}


 Oracle 12c uses both sampling~\cite{oracle_cbo} and scan-based histograms~\cite{oracle_hist} for selectivity estimation. It also reuses the selectivity of an observed query if the \emph{same} query is issued again~\cite{oracle_feedback}.
PostgreSQL~\cite{postgres_cbo}, IBM~\cite{ibm_cbo}, MariaDB~\cite{mariadb_cbo}, and SQL Server~\cite{sqlserver_cbo} all rely on scan-based histograms for selectivity estimation.
In particular, SQL Server offers an option   whereby histograms are updated automatically when more than a certain percentage of tuples have been updated (20\% by default).
  Hive~\cite{hive_cbo} and   Spark~\cite{spark_cbo} use cost-based optimizers based on the range and number of distinct values in each column. This approach can be regarded as a simple form of histograms where each bucket is a distinct value and bucket frequencies are uniform.
  }

\ignore{
\ph{Approximate Query Processing}
Approximate Query Processing (AQP) systems produce error-bounded approximate answers to (mostly) aggregate queries. Sampling-based AQP~\cite{park2017database,mozafari_eurosys2013,kandula2016quickr,wander-join,park2018verdictdb} is one of the most popular approaches; its underlying technique is similar to the sampling technique used for selectivity estimation in that it typically relies on the statistical property, such as the law of large numbers, the central limit theorem, etc. However, AQP systems typically rely on samples rather than assertions.

We are aware that the previous work called \emph{database learning} (DBL)~\cite{park2017database} relies on a model to improve the quality of AQP answers; the model is built from the answers to past queries. However, database learning does not employ mixture models; it relies on maximum entropy principle for constructing its model.
}



\section{Conclusion and Future Work}

The prohibitive cost of 
query-driven selectivity estimation techniques 
    has greatly limited their adoption by DBMS vendors, which for the most part still rely on scan-based histograms and samples that are periodically updated
    and are otherwise stale.
In this paper, we proposed a new framework, called \emph{selectivity learning}
or \method, which learns from every query to continuously refine its internal model of the underlying data, and therefore produce increasingly more accurate selectivity estimates over time. 
\method differs from previous query-driven selectivity estimation techniques by (i) not using histograms
and (ii) enabling extremely fast refinements using its mixture model.
We formally showed that
 the training cost of our mixture model   can be reduced from exponential   to only quadratic complexity (\cref{thm:quadratic}).
%
%
%


\ph{Supporting Complex Joins}
When modeling the selectivity of join queries,
even state-of-the-art modes ~\cite{kipf2018learned,dutt2019selectivity,deep-likelihood}
take a relatively simple approach:
conceptually prejoining corresponding tables,
and constructing a joint probability distribution over each join pattern.
We plan to similarly extend our current formulation of \method 
 to model the  selectivity of general joins.
\ignore{however,
since there may be $O(2^N)$ join patterns given $N$ tables,
this approach soon becomes infeasible.
    Instead, we think we need a combination of sampling
    and models. Unlike many models, sampling is 
    more robust to curse of dimensionality (i.e., 
    we can more easily handle many tables with a large number
    of columns); however,
    unlike models, sampling is less efficient in capturing
    the data distribution (as shown in our experiments).
    By combining these two, however, we will be able
    to achieve accurate estimation even for high-dimensional data.
}



\section{Acknowledgement}

This material is based upon work supported by the National Science Foundation under Grant No.~1629397 and the Michigan Institute for Data Science (MIDAS) PODS.
Any opinions, findings, and conclusions or recommendations expressed in this material are those of the author(s) and do not necessarily reflect the views of the National Science Foundation.

\bibliographystyle{ACM-Reference-Format}
\bibliography{biblio/selectivity,biblio/mozafari,biblio/approximate,biblio/predictability}  


%
%

\clearpage

\appendix

\section{Analysis of Iterative Scaling}
\label{sec:appendix}

\newcommand{\lamvec}{\bm{\lambda}}

The previous work uses an algorithm called \emph{iterative scaling} to optimize the bucket frequencies. In this section, we analyze why using the approach is non-trivial when some buckets are merged or pruned.
Specifically, we show that the approach becomes non-trivial if a histogram bucket may partially overlap with a predicate range, rather than being completely within or outside of the predicate range.

\ph{Selectivity Estimation with Histograms}
First, we describe how histograms can be used for selectivity estimation. This description is needed to present iterative scaling itself.
Let $G_z$ for $z = 1, \ldots, m$ denote the boundary of $z$-th bucket. 
$w_z$ is the frequency of the $z$-th bucket.
Then, the histogram approximates the distribution of data as follows:
\[
f(x) = \frac{w_j}{|G_j|} \qquad \text{for } G_j \text{ such that } x \in G_j
\]
For a query's predicate $P_i$, its selectivity can be computed as follows:
\[
s_i = \sum_{j=1}^m \frac{|B_i \cap G_j|}{|G_j|} \, \ww = \ss
\]
One can observe that the above expression is akin to selectivity estimation formula with a mixture model (\cref{sec:method:est}), which is natural since mixture models can be regarded as a generalization of histograms.

\ph{Optimization with Maximum Entropy Principle}
To optimize the bucket frequencies, the previous work employs the maximum entropy principle.
When the maximum entropy principle is used, one can optimize the bucket frequencies by solving the following problem.

\begin{problem}[Training with Max Entropy Principle]
\begin{align*}
\argmin_{\ww} \quad
&\int \; f(x) \, \log(f(x)) \; dx \\
\text{such that} \quad 
& A \, \ww = \ss
\end{align*}
where $A$ is a $n$-by-$m$ matrix; its $(i,j)$-th entry is defined as
\[
(A)_{i,j} = \frac{|B_i \cap G_j|}{|G_j|}.
\]
$\ss$ is a size-$n$ column vector; its $i$-th entry is the observed selectivity of the $i$-th query.
\end{problem}

If a histogram bucket is completely included within a predicate range, $A_{i,j}$ takes 1; if a histogram bucket is completely outside a predicate range, $A_{i,j}$ takes 0. If a histogram bucket partially overlaps with a predicate range, $A_{i,j}$ takes a value between 0 and 1.

For histograms, the integral in the above problem can be directly simplified as follows:
\begin{align*}
\int \; f(x) \, \log(f(x)) \; dx 
&= \sum_{i=1}^m \, 
     \int_{x \in G_i} \; \frac{w_i}{|G_i|} \log\left( \frac{w_i}{|G_i|} \right)
   \; dx  \\
&= \sum_{i=1}^m \, w_i \, \log\left( \frac{w_i}{|G_i|} \right)
\end{align*}

\ph{Iterative Scaling}
Iterative Scaling solves the above problem by updating model parameters in a sequential order; that is, it updates $w_1$ as using fixed values for other parameters, it updates $w_2$ as using fixed values for other parameters, and so on. This iteration (i.e., updating all $w_1$ through $w_m$) continues until those parameter values converge. In this process, the important part is the formula for the updates.

To derive this update rule, the previous work uses the Lagrangian method, as follows. In the following derivation, we suppose a slightly more general setting; that is, we allow possible partial overlaps. We first define $L$:
\[
L(\ww, \lamvec) = \sum_{j=1}^m w_j 
  \log\left( \frac{w_j}{|G_j|} \right)
  - \lamvec^\top (A \ww - \ss)
\]
where $\lamvec$ is a size-$m$ column vector containing $m$ Lagrangian multipliers, i.e.,  $\lamvec = (\lambda_1, \ldots, \lambda_m)^\top$.

At optimal solutions, the derivative of $L$ with respect to $\ww$ and $\lambda$ must be zero. 
Let the column vectors of $A$ be denoted by $a_1, \ldots, a_m$; that is, $A = [a_1, \ldots, a_m]$.
Then,
\begin{align*}
\frac{\partial L}{\partial w_j}
= \log(w_j) + 1 - \log(|G_j|) - a_j^\top \lamvec = 0
\end{align*}
Let $z_i = \exp(\lambda_i)$. Then,
\begin{align}
&\log \left( \frac{w_j}{|G_j|} \right) = a_j^\top \lamvec - 1  
  \qquad \text{for } j = 1, \ldots, m \nonumber \\
\Rightarrow \quad & \frac{w_j}{|G_j|} = \frac{1}{e} 
  \exp \left( a_j^\top \lamvec \right) 
    \qquad \text{for } j = 1, \ldots, m \nonumber \\
\Rightarrow \quad & w_j = \frac{|G_j|}{e} 
  \prod_{i=1}^n \exp \left( A_{i,j} \, \lambda_i \right)
    \qquad \text{for } j = 1, \ldots, m \nonumber \\
\Rightarrow \quad & w_j = \frac{|G_j|}{e} 
  \prod_{i=1}^n z_i^{A_{i,j}}
    \qquad \text{for } j = 1, \ldots, m \label{eq:is:freq}
\end{align}
From the constraint that $\sum_{j=1}^m A_{i,j} w_j = s_i$ for $i = 1, \ldots, n$,
\begin{align}
\sum_{j=1}^m A_{i,j} \frac{|G_j|}{e}
  \prod_{i=1}^n z_i^{A_{i,j}} = s_i
  \label{eq:is:pre}
\end{align}
\textbf{Let's assume} that $A_{i,j}$ always takes either 0 or 1 (the condition used in the previous work); then, the above expression can be simplified to produce an analytic update rule. Let $C_i$ be an index set such that $C_i = \{ k \mid A_{i,k} = 1, k = 1, \ldots, m \}$. Also, let $D_{j \textbackslash i} = \{ k \mid A_{k,j} = 1, k = 1, \ldots, n, k \ne i \}$. Then,
\begin{align}
&\sum_{j=1}^m A_{i,j} \frac{|G_j|}{e}
  \prod_{i=1}^n z_i^{A_{i,j}} = s_i \nonumber \\
\Rightarrow \quad &
  \sum_{j \in C_i} \frac{|G_j|}{e} \; z_i \prod_{k \in D_{i \textbackslash j}} z_k = s_i \nonumber \\
\Rightarrow \quad &
  z_i = \frac{s_i}{\sum_{j \in C_i} |G_j| / e \prod_{k \in D_{i \textbackslash j}} z_k}
  \label{eq:is:update}
\end{align}
Using the above equation, the previous work continues to update $z_i$ for $i = 1, \ldots, n$ until convergence. Once those values are obtained, the bucket frequencies can be obtained by \cref{eq:is:freq}.

\textbf{However}, without the assumption that $A_{i,j}$ always takes either 0 or 1,
obtaining the update equation (\cref{eq:is:update}) from \cref{eq:is:pre} is non-trivial.

\end{document}